\providecommand{\tabularnewline}{\\}
\numberwithin{equation}{section}
\numberwithin{figure}{section}
\newcommand{\lyxaddress}[1]{
\par {\raggedright #1
\vspace{1.4em}
\noindent\par}
}
\theoremstyle{plain}
\newtheorem{thm}{\protect\theoremname}
  \theoremstyle{plain}
  \newtheorem{lem}[thm]{\protect\lemmaname}
  \providecommand{\lemmaname}{Lemma}
\providecommand{\theoremname}{Theorem}
\begin{document}

\title{Navier-Stokes solver using Green's functions I: channel flow and
plane Couette flow}

\author{Divakar Viswanath and Ian Tobasco}

\maketitle

\lyxaddress{Department of Mathematics, University of Michigan (divakar@umich.edu)
and Courant Institute, New York University (ian.tobasco@gmail.com). }
\begin{abstract}
Numerical solvers of the incompressible Navier-Stokes equations have
reproduced turbulence phenomena such as the law of the wall, the dependence
of turbulence intensities on the Reynolds number, and experimentally
observed properties of turbulence energy production. In this article,
we begin a sequence of investigations whose eventual aim is to derive
and implement numerical solvers that can reach higher Reynolds numbers
than is currently possible. Every time step of a Navier-Stokes solver
in effect solves a linear boundary value problem. The use of Green's
functions leads to numerical solvers which are highly accurate in
resolving the boundary layer, which is a source of delicate but exceedingly
important physical effects at high Reynolds numbers. The use of Green's
functions brings with it a need for careful quadrature rules and a
reconsideration of time steppers. We derive and implement Green's
function based solvers for the channel flow and plane Couette flow
geometries. The solvers are validated by reproducing turbulence phenomena
in good agreement with earlier simulations and experiment.
\end{abstract}

\section{Introduction}

The incompressible Navier-Stokes equations are given by $\partial\mathbf{u}/\partial t+(\mathbf{u}.\mathbf{\nabla})\mathbf{u}=-\nabla p+\triangle\mathbf{u}/Re$,
where $\mathbf{u}$ is the velocity field and $p$ is pressure. The
incompressibility constraint is $\nabla.\mathbf{u}=0$. We assume
that a characteristic speed $U$ and a characteristic length $L$
have been chosen and that the Reynolds number $Re$ is given by $UL/\nu$,
where $\nu$ is the kinematic viscosity. It is assumed that the unit
for mass is chosen so that the fluid has  constant density equal to
$1$.

The topic of this paper is the use of Green's functions to solve the
incompressible Navier-Stokes equations. The Navier-Stokes equations
are nonlinear while Green's functions are based on linear superposition.
Thus the solutions of the incompressible Navier-Stokes equations cannot
be described using Green's functions. However, if we discretize the
Navier-Stokes equations in time but not in space, and the time discretization
treats the nonlinear advection term \texttt{$(\mathbf{u.\nabla)}\mathbf{u}$}
explicitly and the pressure term $-\nabla p$ and the viscous diffusion
term $\triangle\mathbf{u}/Re$ implicitly, each time step is a linear
boundary value problem. The simplest such discretization, which is
to treat the advection term using forward Euler and the pressure and
diffusion term using backward Euler, gives the equation
\[
\frac{\mathbf{u}^{n+1}-\mathbf{u}^{n}}{\Delta t}+\left((\mathbf{u}.\nabla)\mathbf{u}\right)^{n}=-\nabla p^{n+1}+\frac{1}{Re}\triangle\mathbf{u}^{n+1}
\]
where the superscripts indicate the time step. This is a linear boundary
value problem for $\mathbf{u}^{n+1}$ with the constraint $\nabla.\mathbf{u}^{n+1}=0$
and with boundary conditions on $\mathbf{u}$ depending upon the geometry
of the flow. Green's functions may be derived for this linear boundary
value problem as shown in the theory of hydrodynamic potentials \cite{Ladyzhenskaya1969}. 

Green's functions exploit the principle of linear superposition to
express solutions of linear boundary value problems in integral form.
In numerical methods based on Green's functions, the weight of the
method falls upon quadrature rules as opposed to rules for the discretization
of derivatives. The importance of quadrature rules is already clear
in the early work of Rokhlin \cite{Rokhlin1983,Rokhlin1985}, where
Richardson extrapolation and trapezoidal rules are used to effect
accurate quadrature of the integral equations of acoustic scattering
and potential theory.

From the beginning, Greengard, Rokhlin, and others \cite{GreengardRokhlin1991,Rokhlin1983,Rokhlin1985}
have emphasized the ability of Green's function based methods in handling
very thin boundary layers. Shear flows such as channel flow or pipe
flow or plane Couette flow are characterized by very thin boundary
layers at high Reynolds numbers. There is turbulence activity in the
boundary layer as well as in the outer flow and the viscous effects
propagate into the domain from the boundary layer. Green's function
based methods are likely to be advantageous in handling such boundary
layers. It is legitimate to ask why a numerical method must be believed
to capture the effect of the viscous term with the very small $1/Re$
coefficient. In Green's function based methods, that effect is captured
exactly by the analytic form of the Green's function. 

As far as we are aware, time integration using Green's functions has
not been tried on a nonlinear problem of the scale and difficulty
associated with fully developed turbulence. Thus some of the issues
that come up in relation to time integration in Section 3 cannot be
considered unexpected.

Many of the subtleties associated with the numerical integration of
the Navier-Stokes equations are related to the treatment of pressure.
The equations do not explicitly determine the evolution of pressure.
Instead, pressure is determined implicitly through the incompressibility
constraint on the velocity field. One of the key algorithms for solving
the Navier-Stokes equations in channel and plane Couette geometries
is due to Kleiser and Schumann \cite[1980]{KleiserSchumann1980}.
Kleiser and Schumann introduced a numerical technique for enforcing
the physically correct boundary conditions on pressure. Another method
was introduced by Kim, Moin, and Moser \cite[1987]{KimMoinMoser1987}
in a paper that is a landmark in the modern development of fluid mechanics.
Kim et al. \negthinspace{}\negthinspace{}\negthinspace{}\negthinspace{}
reproduced several features of fully developed turbulence from direct
numerical simulation of the Navier-Stokes equations. Their calculation
was initialized with a velocity field that was generated using large
eddy simulation. One of the highlights of the paper by Kim et al.
\negthinspace{}\negthinspace{}\negthinspace{}\negthinspace{} is
the correction of a calibration error in a published experiment using
numerical data.

\begin{figure}
\begin{centering}
\includegraphics[scale=0.5]{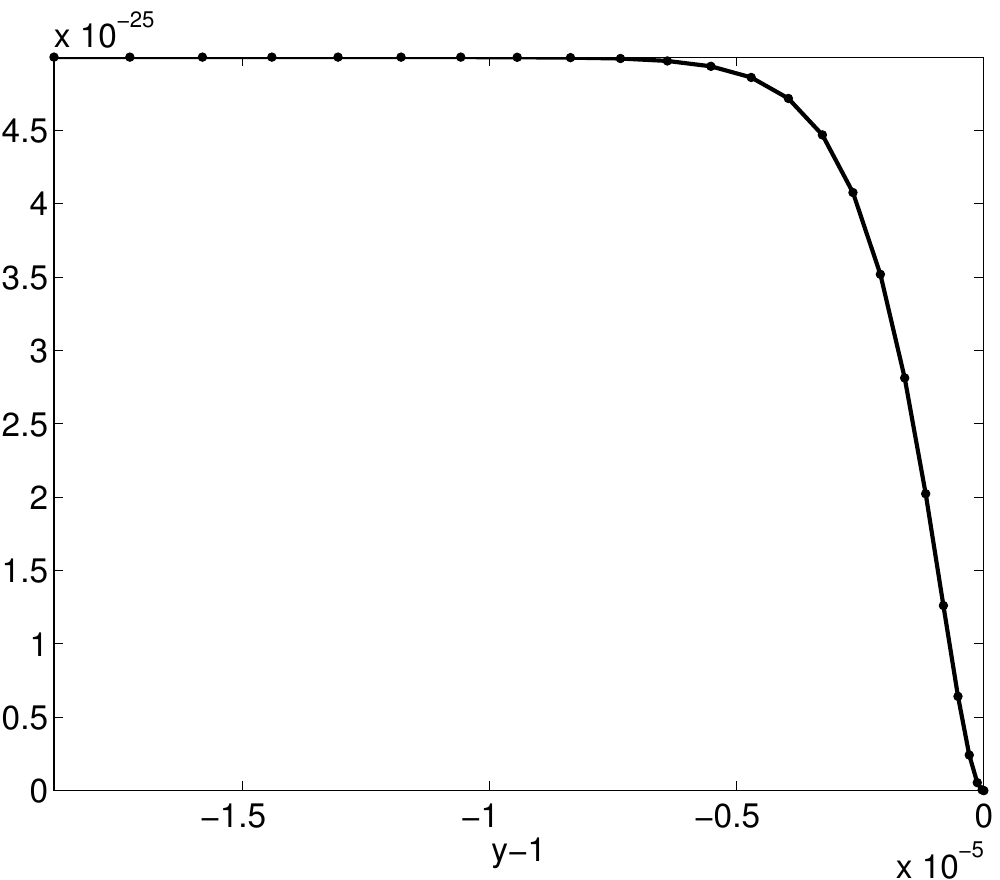}
\par\end{centering}

\caption{Very thin boundary layer at $y=1$ in the solution of the fourth order
boundary value problem $(D^{2}-\beta^{2})(D^{2}-\alpha^{2})u=1$ with
boundary conditions $u(\pm1)=u'(\pm1)=0$ and with parameters $\alpha=10^{6}$
and $\beta=\sqrt{2}\alpha$. The solid markers are from an exact formula
and the solid line is a numerical solution.\label{fig:Very-thin-boundary}}
\end{figure}
The channel geometry is rectangular with $x$, $y$, and $z$ being
the streamwise, wall-normal, and spanwise directions by convention.
The corresponding components of the velocity field $\mathbf{u}$ are
denoted $u$, $v$, and $w$. The walls are at $y=\pm1$ with fluid
in between. The no-slip boundary conditions require $\mathbf{u}=0$
at $y=\pm1$. The boundary conditions in the wall-parallel directions
are typically periodic in numerical work. The flow is driven either
by maintaining a constant mass flux or a constant pressure gradient
in the streamwise direction. In plane Couette flow, the geometry is
the same but the walls are moving. The no-slip boundary conditions
are $(u,v,w)=(0,\pm1,0)$ at $y=\pm1$. Plane Couette flow is driven
by the motion of the walls.

Both the Kleiser-Schumann and Kim-Moin-Moser methods come down to
solving linear boundary value problems in the $y$ or wall-normal
directions. The periodic directions are tackled using Fourier analysis
and dealiasing of the nonlinear advection term. Each Fourier component
then yields a linear boundary value problem in the $y$ direction.
The $y$ direction is discretized using Chebyshev points $y_{j}=\cos j\pi/M$
with $j=0,\ldots,M$. 

Here we parenthetically mention the interpretation of the parameters
$\alpha$ and $\beta$, which occur in the ensuing discussion. The
parameters are given by $\alpha^{2}=\ell^{2}/\Lambda_{x}^{2}+n^{2}/\Lambda_{z}^{2}$
and $\beta^{2}=\alpha^{2}+\gamma Re/\Delta t$, where $\ell$ and
$n$ are the Fourier modes and $2\pi\Lambda_{x}$ and $2\pi\Lambda_{z}$
are the dimensions of the domain in the streamwise and spanwise directions,
respectively. The parameter $\gamma$ depends on the time integration
scheme. More details are found in Section 3.

In Figure \ref{fig:Very-thin-boundary}, we have shown the solution
of the linear boundary value problem 
\begin{equation}
(D^{2}-\beta^{2})(D^{2}-\alpha^{2})u(y)=f(y)\quad\quad u(\pm1)=u'(\pm1)=0\label{eq:fourth-order-bvp}
\end{equation}
with $f(y)\equiv1$. Here $D=\frac{d}{dy}$. A fourth order boundary
value problem of this type occurs explicitly in the method of Kim-Moin-Moser
but it is treated as a composition of two second order boundary value
problems corresponding to the factors $D^{2}-\alpha^{2}$ and $D^{2}-\beta^{2}$.
In the method of Kleiser-Schumann, a fourth order boundary value problem
is not formed explicitly. Both methods solve the second order boundary
value problem 
\begin{equation}
(D^{2}-\beta^{2})u(y)=f(y)\quad\quad u(\pm1)=0\label{eq:second-order-bvp}
\end{equation}
by using the Chebyshev series $u(y)=\sum_{m=0}^{M}c_{m}T_{m}(y)$
and the set of equations obeyed by the coefficients $c_{n}$ given
on page 119 of Gottlieb and Orszag \cite[1977]{GottliebOrszag1977}. 

Although the method on p.\negthinspace{}\negthinspace{}\negthinspace{}
119 of Gottlieb and Orszag \cite{GottliebOrszag1977} has been extensively
used in turbulence computations for more than two decades, its numerical
properties have not been investigated as far as we know. For reliable
use in solving the Navier-Stokes equations at high Reynolds numbers,
the method should be able to accurately reproduce thin boundary layers,
such as the one shown in Figure \ref{fig:Very-thin-boundary}. There
is reason to be concerned. If the method is used to solve fourth order
problems of the type \prettyref{eq:fourth-order-bvp}, it forms linear
systems with condition numbers of the order $\alpha^{2}\beta^{2}$
for the fourth order boundary value problem \prettyref{eq:fourth-order-bvp}
and of order $\alpha^{2}$ for the second order boundary value problem
\prettyref{eq:second-order-bvp} \cite{Viswanath2012} . For the problem
shown in Figure \ref{fig:Very-thin-boundary}, the condition number
is more than $10^{24}$ and greatly exceeds the machine epsilon of
double precision arithmetic. 

Zebib \cite[1984]{Zebib1984} and Greengard \cite{Greengard1991}
suggested using a Chebyshev series for the highest derivative. For
the fourth order problem \prettyref{eq:fourth-order-bvp}, the Chebyshev
expansion would be $u''''=\sum_{m=0}^{M}c_{m}T_{m}(y)$. This device
avoids the ill-conditioning of Chebyshev differentiation due to clustering
at the end points that causes large errors in spectral differentiation.
However, the spectral integration method of Zebib and Greengard also
has a condition number greater than $\alpha^{2}\beta^{2}$ for the
fourth order boundary value problem \prettyref{eq:fourth-order-bvp}
\cite{CoutsiasHT1996,Viswanath2012}. 

The method of spectral integration has been extended and investigated
carefully in \cite{Viswanath2012}. The equations presented somewhat
tersely on p.\negthinspace{}\negthinspace{}\negthinspace{} 119
of Gottlieb and Orszag \cite{GottliebOrszag1977} are in fact a form
of spectral integration. The methods used by Kleiser-Schumann and
Kim-Moin-Moser have numerical properties that are practically identical
to that of Zebib and Greengard. The essential equivalence of the Gottlieb-Orszag
equations with spectral integration was first recognized by Charalambides
and Waleffe \cite{CharalambidesWaleffe2008}. Because of this equivalence
the advantages of explicit spectral integration, as implemented in
\cite{LundbladhHenningsonJohansson1992,LundbladhHenningsonReddy1994}
and \cite{Waleffe2003}, are not as overwhelming as illustrated in
Figure 3 of \cite{Waleffe2003}. When we refer to spectral integration,
it includes the methods of Gottlieb-Orszag, Kleiser-Schumann, Zebib,
Kim-Moin-Moser, and Greengard as well as the more general and powerful
formulations derived in \cite{Viswanath2012}. 

Regardless of which version of spectral integration is used, the fact
remains that the linear system for the fourth order boundary value
problem \prettyref{eq:fourth-order-bvp} has a condition number of
$\alpha^{2}\beta^{2}$ . Yet, remarkably, even systems with condition
numbers exceeding $10^{24}$ (see Figure \ref{fig:Very-thin-boundary})
can be solved with a loss of only five or six digits of accuracy.
The accuracy of spectral integration in spite of large condition numbers
can be partly explained using the singular value decomposition \cite{Viswanath2012}.
Another property of spectral integration (in all its forms) is that
some of the intermediate quantities have large errors which cancel
in the final answer \cite{Viswanath2012}. A robust implementation
must take these two properties into account. Spectral integration
can indeed handle thin boundary layers, such as the one shown in Figure
\ref{fig:Very-thin-boundary}, in spite of large condition numbers.
The robustness of spectral integration was essential to the outstanding
success of the methods of Kleiser-Schumann and Kim-Moin-Moser in more
than two decades of use (however, not all implementations are equal). 

In Figure \ref{fig:Very-thin-boundary}, the thickness of the boundary
layer is of the order $10^{-6}$. It takes more than $10,000$ Chebyshev
points in the interval $-1\leq y\leq1$ to resolve that boundary layer
in spite of quadratic clustering near the endpoints. That is a lot
more than the number needed if the grid points are chosen in a suitably
adaptive manner. Viswanath \cite{Viswanath2012} has derived a version
of spectral integration that applies to piecewise Chebyshev grid points.
Using that method, the number of grid points needed to solve a linear
boundary value with a boundary layer as thin as the one shown in Figure
\ref{fig:Very-thin-boundary} is reduced from $8192$ to $96$. It
appears that this new method can be used to obtain considerable improvement
in both the Kleiser-Schumann and Kim-Moin-Moser methods. 

The Green's function method, whose development we begin in this paper,
is an alternative which in its final form will enjoy the same advantages.
Spectral integration is an essentially one dimensional idea and cannot
be generalized to pipe flows with non-circular cross-sections and
to other non-rectangular geometries. The use of Green's functions
on the other hand will generalize. A great many analytic and numerical
complications arise when Green's functions are derived for cross-sections
of pipes as a part of a numerical method for solving the Navier-Stokes
equations. It is essential to develop the method for the channel geometry,
as we do here, before those difficulties are confronted. It is also
possible that the Green's function method will turn out to be faster
than the methods of Kleiser-Schumann and Kim-Moin-Moser, revised in
the manner suggested in the previous paragraph, but one cannot be
certain until the two alternatives are developed to their final form.
Spectral integration over piecewise Chebyshev grids appears to be
sensitive to the location of the nodes used to divided the interval
\cite{Viswanath2012}. The Green's function method is likely to be
much less sensitive. Lastly, we mention that the Green's function
method has a theoretical advantage. When implemented using suitable
quadrature rules, its numerical stability is immediately obvious. 

The Green's function method for solving the Navier-Stokes equations
in channel and plane Couette flow geometries is developed in Sections
2 and 3. The quadrature rule that is used is provisional. The way
to derive robust quadrature rules is indicated in Section 3 and the
complete method will be given in the sequel to this paper.

In Section 4, we show the theoretically intriguing result that Green's
functions may be used to eliminate numerical differentiation in the
wall-normal or $y$ direction entirely from the numerical scheme.
The derivatives that occur in the nonlinear advection term can be
transferred to the Green's function using integration by parts with
the result that numerical differentiation is replaced by analytic
differentiation. Such a scheme is not practical at high Reynolds numbers
for reasons given in that section.

In Section 5, we validate the Green's function based method. In view
of the extensions discussed in this introduction, the code has been
written in such a way that it can reach hundreds of millions of grid
points using only a dozen or two processor cores. Since the piecewise
Chebyshev extension with robust quadrature rules is yet to be fully
developed, the full capabilities of this code are not exercised. Yet
we report simulations with up to ten million grid points and investigate
certain aspects of fully developed turbulence to demonstrate the viability
of the Green's function approach.

\section{Green's functions and template boundary value problems}

Every time step in the solution of the Navier-Stokes equations in
the channel geometry reduces to the solution of a number of linear
boundary value problems of the type \eqref{eq:fourth-order-bvp} and
\eqref{eq:second-order-bvp}. In this section, we derive the Green's
functions for the solutions of those boundary value problems. The
Green's functions can be derived using very standard methods. However,
the resulting expressions are unsuitable for numerical evaluation.
When the parameters $\alpha$ and $\beta$ are as large as $10^{6}$,
as in Figure \ref{fig:Very-thin-boundary}, quantities of the type
${\rm e}^{\beta y}$ or ${\rm e}^{-\beta y}$ will overflow. Thus
we begin by deriving the Green's functions in a manner that leads
to expressions suitable for accurate numerical evaluation. In the
last part of this section, we consider the evaluation of derivatives
such as $du/dy$, where $u$ is the solution of either of the boundary
value problems \eqref{eq:fourth-order-bvp} and \eqref{eq:second-order-bvp},
and the evaluation of the solution $u$ when the source term $f$
is given in the form $f\equiv df_{1}/dy$.

\subsection{Green's functions of linear boundary value problems}

Let $Lu=u^{(n)}+a_{1}(y)u^{(n-1)}+\cdots+a_{n-1}(y)u^{(1)}+a_{n}(y)u$.
The coefficients $a_{i}(y)$, $1\leq i\leq n$, are assumed to be
\emph{real-valued} and sufficiently smooth. The adjoint operator is
given by $L^{+}v=(-1)^{n}v^{(n)}+(-1)^{n-1}(a_{1}v)^{(n-1)}+\cdots+a_{n}v$.
We assume throughout that the functions that arise have the requisite
order of smoothness and that $n\geq2$. The degree of differentiability
is specifically mentioned only if there is a nontrivial reason for
doing so.

The lemmas in this subsection are not new. They can be found in \cite{CoddingtonLevinson1955}
in some form or the other. However, our derivation leads to formulas
which are easier to manipulate and which are suitable for numerical
evaluation. Our derivation of the Green's function for the boundary
value problem $Lu=f$, $a\leq y\leq c$, with suitable boundary conditions
on $u$, is based on the Lagrange identity, which is the next lemma. 
\begin{lem}
For any two functions $u$ and $v$, the Lagrange identity $v\, Lu-u\, L^{+}v=[uv]'$
holds, with 
\[
[uv]=\sum_{k=0}^{n-1}\sum_{r=0}^{n-k-1}(-1)^{r}(va_{k})^{(r)}u^{(n-k-r-1)}
\]
and $a_{0}\equiv1$.\end{lem}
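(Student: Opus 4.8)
The plan is to reduce the identity to a one-term ``generalized integration by parts'' formula and then sum over the terms of $L$. Writing $D=d/dy$ and keeping the convention $a_0\equiv 1$, observe that $v\,Lu=\sum_{k=0}^{n}(a_kv)\,u^{(n-k)}$ and, reading off the definition of the adjoint, $u\,L^{+}v=\sum_{k=0}^{n}(-1)^{n-k}u\,(a_kv)^{(n-k)}$; that is, the adjoint of the single-term operator $u\mapsto a_ku^{(n-k)}$ is $v\mapsto(-1)^{n-k}(a_kv)^{(n-k)}$. Hence
\[
v\,Lu-u\,L^{+}v=\sum_{k=0}^{n}\Bigl((a_kv)\,u^{(n-k)}-(-1)^{n-k}u\,(a_kv)^{(n-k)}\Bigr),
\]
and it suffices to prove the elementary identity: for every integer $m\ge 0$ and all sufficiently smooth $w$ and $u$,
\[
w\,u^{(m)}-(-1)^{m}u\,w^{(m)}=\frac{d}{dy}\Biggl[\sum_{r=0}^{m-1}(-1)^{r}w^{(r)}\,u^{(m-r-1)}\Biggr],
\]
the empty sum at $m=0$ being zero, in agreement with the left side which also vanishes there.

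This helper identity I would prove by a direct telescoping (induction on $m$ works equally well). Differentiating the bracket on the right with the product rule gives $\sum_{r=0}^{m-1}(-1)^{r}w^{(r+1)}u^{(m-r-1)}+\sum_{r=0}^{m-1}(-1)^{r}w^{(r)}u^{(m-r)}$; shifting the index of the first sum so that it reads $-\sum_{r=1}^{m}(-1)^{r}w^{(r)}u^{(m-r)}$, every term with $1\le r\le m-1$ cancels against its counterpart in the second sum, and what remains is exactly $w\,u^{(m)}$ (the $r=0$ term of the second sum) minus $(-1)^{m}u\,w^{(m)}$ (the $r=m$ term of the shifted first sum).

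To finish, I apply the helper identity with $w=a_kv$ and $m=n-k$ for each $k$ and add over $k=0,\dots,n$. The $k=n$ term drops out (its bilinear concomitant is an empty sum and the summand itself is $a_nv\,u-a_nv\,u=0$), so the remaining double sum is indexed by $0\le k\le n-1$ and $0\le r\le n-k-1$; pulling $d/dy$ out of the sum and using $(a_kv)^{(r)}=(va_k)^{(r)}$ yields
\[
v\,Lu-u\,L^{+}v=\frac{d}{dy}\Biggl[\sum_{k=0}^{n-1}\sum_{r=0}^{n-k-1}(-1)^{r}(va_k)^{(r)}\,u^{(n-k-r-1)}\Biggr]=[uv]',
\]
which is the claim. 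There is no analytic difficulty here --- the assumed smoothness of the $a_i$ (and the implicit smoothness of $u,v$) makes every derivative that appears legitimate --- so the only thing that needs care is the index bookkeeping in the telescoping step together with the handling of the boundary cases ($m=0$ and $k=n$), which is precisely what pins the outer sum at $n-1$ as stated.
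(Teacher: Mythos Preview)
Your proof is correct and follows essentially the same route as the paper's: the paper simply says to integrate each term of $\int v\,Lu\,dy$ by parts repeatedly, and your telescoping helper identity for $w\,u^{(m)}-(-1)^{m}u\,w^{(m)}$ is precisely the differentiated form of that repeated integration by parts, applied term-by-term with $w=a_kv$ and $m=n-k$. The only difference is presentational --- you have made the index bookkeeping and the vanishing of the $k=n$ contribution explicit, which the paper leaves to the reader.
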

\begin{proof}
Begin with $\int v\, Lu\, dy$ and integrate each term by parts repeatedly.
\end{proof}
Define 
\[
\tilde{u}=\left(\begin{array}{c}
u\\
u^{(1)}\\
\vdots\\
u^{(n-1)}
\end{array}\right).
\]
The quantity $[uv]$ which appears in the Lagrange identity may be
written as $[uv]=\tilde{u}^{T}A\tilde{v}$, where $A$ is an $n\times n$
matrix. All the entries of $A$ are determined by the lemma. However,
all that we need to know about $A$ is that it has the following reverse
triangular structure
\[
A=\left(\begin{array}{cccc}
\cdot & \cdot & \cdot & (-1)^{n-1}\\
\cdot & \cdot & \cdot & 0\\
\cdot & -1 & 0 & 0\\
1 & 0 & 0 & 0
\end{array}\right)
\]
and that the reverse diagonal is as shown above.

Let $u_{1},\ldots,u_{n}$ be a basis of solutions of the homogeneous
problem $Lu=0$. Similarly, let $v_{1},\ldots,v_{n}$ be a basis of
solutions of the adjoint problem $L^{+}u=0$. Denote the $n\times n$
matrices 
\[
(\tilde{u}_{1},\ldots,\tilde{u}_{n})\quad\text{and}\quad(\tilde{v}_{1},\ldots,\tilde{v}_{n})
\]
by $U$ and $V$, respectively (the determinant of $U$ is the Wronskian).
The Lagrange identity implies the following lemma.
\begin{lem}
$\frac{d}{dy}\left(U^{T}AV\right)=0.$
\end{lem}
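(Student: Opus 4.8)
The plan is to read the claim off directly from the Lagrange identity of the previous lemma, applied column by column. First I would observe that the $(i,j)$ entry of the matrix $U^{T}AV$ is exactly $\tilde{u}_{i}^{T}A\tilde{v}_{j}$, which by the defining relation $[uv]=\tilde{u}^{T}A\tilde{v}$ equals the bracket $[u_{i}v_{j}]$ formed from the $i$-th homogeneous solution $u_{i}$ of $Lu=0$ and the $j$-th solution $v_{j}$ of the adjoint problem $L^{+}v=0$. Hence it suffices to show that $\frac{d}{dy}[u_{i}v_{j}]=0$ for every pair $(i,j)$ with $1\le i,j\le n$.

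For that, I would invoke the Lagrange identity $v\,Lu-u\,L^{+}v=[uv]'$ with the choices $u=u_{i}$ and $v=v_{j}$. Since $Lu_{i}=0$ and $L^{+}v_{j}=0$ by construction, the left-hand side vanishes identically in $y$, so $[u_{i}v_{j}]'\equiv 0$. As this holds for all $i$ and $j$, every entry of $U^{T}AV$ is constant in $y$, which is precisely the assertion $\frac{d}{dy}\left(U^{T}AV\right)=0$.

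The one point worth flagging, though it is not a genuine obstacle, is that $A$ itself depends on $y$ through the coefficients $a_{k}(y)$ and their derivatives, so the conclusion is not that $A$ is constant but that the specific bilinear combinations $\tilde{u}_{i}^{T}A\tilde{v}_{j}$ are. The Lagrange identity already accounts for $A'$ correctly, being an identity valid for arbitrary sufficiently smooth $u$ and $v$ rather than only for solutions, so no separate bookkeeping of the derivative of $A$ is needed. The only mild care required is to interpret $\frac{d}{dy}\left(U^{T}AV\right)$ entrywise when passing from the scalar identity to the matrix statement; with that reading the argument is immediate, and there is no hard step.
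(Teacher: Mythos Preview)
Your proof is correct and follows exactly the route the paper intends: the paper's entire proof is the remark that the lemma is implied by the Lagrange identity, and you have simply spelled out that implication entrywise. Your added observation about the $y$-dependence of $A$ being absorbed into the identity is a helpful clarification but not an additional ingredient.
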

We will assume that the bases of solutions are chosen in such a way
that 
\begin{equation}
U(y)^{T}A(y)V(y)=I\label{eq:UTAVeqI}
\end{equation}
 where $I$ is the identity matrix. The homogeneous solutions $u_{i}$
and $v_{i}$ are used to construct the Green's function of $Lu=f$.
Before deriving the Green's function, we give the identity 
\begin{equation}
v_{j}=\frac{\det\left(\begin{array}{cccc}
u_{1} & \ldots & u_{1}^{(n-2)} & 0\\
u_{2} & \ldots & u_{2}^{(n-2)} & \cdot\\
 & \cdots &  & 1\\
 & \cdots &  & \cdot\\
u_{n} & \ldots & u_{n}^{(n-2)} & 0
\end{array}\right)}{\det\left(\begin{array}{cccc}
u_{1} & u_{1}' & \ldots & u_{1}^{(n-1)}\\
u_{2} & u_{2}' & \ldots & u_{2}^{(n-1)}\\
 &  & \cdots\\
u_{n} & u_{n}' & \ldots & u_{n}^{(n-1)}
\end{array}\right)}.\label{eq:vj-determinant}
\end{equation}
The entry equal to $1$ in the last column of the numerator is in
row number $j$. This identity is derived as follows. We choose the
$j$-th column of \eqref{eq:UTAVeqI} to get 
\[
U(y)^{T}A(y)\left(\begin{array}{c}
v_{j}\\
\vdots\\
v_{j}^{(n-1)}
\end{array}\right)=\left(\begin{array}{c}
0\\
\vdots\\
1\\
\vdots\\
0
\end{array}\right).
\]
Because of the reverse triangular structure of $A$, the last entry
of $A\tilde{v}_{j}$ is equal to $v_{j}$. Identity \eqref{eq:vj-determinant}
is implied by Cramer's rule. By working with rows of \eqref{eq:UTAVeqI}
instead of columns, we get the identity 
\begin{equation}
u_{j}=(-1)^{n-1}\frac{\det\left(\begin{array}{cccc}
v_{1} & \ldots & v_{1}^{(n-2)} & 0\\
v_{2} & \ldots & v_{2}^{(n-2)} & \cdot\\
 & \cdots &  & 1\\
 & \cdots &  & \cdot\\
v_{n} & \ldots & v_{n}^{(n-2)} & 0
\end{array}\right)}{\det\left(\begin{array}{cccc}
v_{1} & v_{1}' & \ldots & v_{1}^{(n-1)}\\
v_{2} & v_{2}' & \ldots & v_{2}^{(n-1)}\\
 &  & \cdots\\
v_{n} & v_{n}' & \ldots & v_{n}^{(n-1)}
\end{array}\right)}.\label{eq:uj-determinant}
\end{equation}
The identities \eqref{eq:vj-determinant} and \eqref{eq:uj-determinant}
are used to construct Green's functions in Section 2.2. 

So far, we have not specified the boundary conditions $u$ must satisfy
in addition to $Lu=f$. We take the domain to be $a\leq y\leq c$
and require that $\tilde{u}(a)$ must lie in an $n-\ell$ dimensional
subspace $V_{\ell}$ (this corresponds to $\ell$ linear conditions
on $\tilde{u}(a)$). Similarly, the right boundary conditions require
that $\tilde{u}(c)$ should lie in a $n-r$ dimensional subspace $V_{r}$.
We require $\ell+r=n$. The Green's function is built up using homogeneous
solutions of $Lu=0$ which satisfy the left or the right boundary
conditions. We assume that the basis solutions are chosen and then
ordered in such a way that 
\[
\tilde{u}_{1}(c),\ldots,\tilde{u}_{\ell}(c)\quad\text{and}\quad\tilde{u}_{\ell+1}(a),\ldots,\tilde{u}_{n}(a)
\]
span the subspaces $V_{r}$ and $V_{l}$, respectively. The following
lemma gives the boundary conditions satisfied by $v_{1},\ldots,v_{n}$,
a basis of solutions of $L^{+}v=0$ which is related to $u_{1},\ldots,u_{n}$
by \eqref{eq:UTAVeqI}. The lemma is useful for checking correctness
of the implementation. It may also be used for the construction of
$v_{i}$ given $u_{i}$. Its proof is obvious from $U^{T}AV=I$.
\begin{lem}
$\tilde{v}_{1}(a),\ldots,\tilde{v}_{l}(a)$ span the orthogonal complement
of the $n-\ell$ dimensional space $A(a)^{T}V_{l}$ and $\tilde{v}_{l+1}(c),\ldots,\tilde{v}_{n}(c)$
span the orthogonal complement of the $n-r$ dimensional space $A(c)^{T}V_{r}$.\label{lem:lemma3}
\end{lem}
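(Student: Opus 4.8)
The plan is to extract everything from the normalization $U(y)^{T}A(y)V(y)=I$ (valid for all $y$ by the previous lemma and our choice of basis), read off at the two endpoints $y=a$ and $y=c$, together with the observations that $A(y)$ is invertible for every $y$ --- its reverse diagonal consists of $\pm 1$, so $\det A(y)=\pm1$ --- and that $V(y)$ is invertible for every $y$, either because $v_{1},\ldots,v_{n}$ is a basis of solutions of $L^{+}v=0$ or simply because $\det(U(y)^{T}A(y)V(y))=1$ forces $\det V(y)\ne 0$.

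First I would look at $y=a$. Entry $(i,j)$ of $U(a)^{T}A(a)V(a)=I$ reads $\tilde u_{i}(a)^{T}A(a)\tilde v_{j}(a)=\delta_{ij}$, i.e. $\langle A(a)^{T}\tilde u_{i}(a),\tilde v_{j}(a)\rangle=\delta_{ij}$; in particular $\tilde v_{j}(a)$ is orthogonal to $A(a)^{T}\tilde u_{i}(a)$ whenever $i\ne j$. By the chosen ordering, $\tilde u_{\ell+1}(a),\ldots,\tilde u_{n}(a)$ span $V_{\ell}$, so $A(a)^{T}\tilde u_{\ell+1}(a),\ldots,A(a)^{T}\tilde u_{n}(a)$ span $A(a)^{T}V_{\ell}$, which therefore has dimension $\dim V_{\ell}=n-\ell$ (here invertibility of $A(a)$ is used). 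For each $j\le\ell$ every index $i\in\{\ell+1,\ldots,n\}$ satisfies $i\ne j$, so $\tilde v_{1}(a),\ldots,\tilde v_{\ell}(a)$ all lie in the orthogonal complement of $A(a)^{T}V_{\ell}$, a subspace of dimension $n-(n-\ell)=\ell$. Since the columns of the invertible matrix $V(a)$ are linearly independent, these $\ell$ vectors are linearly independent and hence span that $\ell$-dimensional complement, which is the first assertion.

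The second assertion at $y=c$ follows by the mirror-image argument: now the ordering gives that $\tilde u_{1}(c),\ldots,\tilde u_{\ell}(c)$ span $V_{r}$, so $A(c)^{T}\tilde u_{1}(c),\ldots,A(c)^{T}\tilde u_{\ell}(c)$ span the $(n-r)$-dimensional space $A(c)^{T}V_{r}$; from $\tilde u_{i}(c)^{T}A(c)\tilde v_{j}(c)=\delta_{ij}$ with $i\le\ell<j$ we conclude $\tilde v_{\ell+1}(c),\ldots,\tilde v_{n}(c)$ lie in the orthogonal complement of $A(c)^{T}V_{r}$, which has dimension $n-(n-r)=r=n-\ell$; invertibility of $V(c)$ makes these $n-\ell$ vectors linearly independent, hence spanning.

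I do not expect a genuine obstacle --- the content is just unwinding $U^{T}AV=I$. The only things to keep straight are the index bookkeeping (which block of the $u_{i}$'s spans $V_{\ell}$ versus $V_{r}$, and the matching vanishing identities $\delta_{ij}=0$ in the appropriate ranges) and the elementary dimension count $\dim W+\dim W^{\perp}=n$ applied together with the fact that the invertible map $A(y)^{T}$ preserves dimensions, which is what makes the asserted dimensions $n-\ell$ of $A(a)^{T}V_{\ell}$ and $n-r$ of $A(c)^{T}V_{r}$ correct.
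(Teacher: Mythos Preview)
Your argument is correct and is exactly the approach the paper intends: the paper's own proof is simply the remark that the lemma is obvious from $U^{T}AV=I$, and what you have written is a careful unpacking of that identity at the two endpoints together with the needed dimension count. There is nothing to add.
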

Let $u$ be the solution of $Lu=f$ subject to the boundary conditions
$\tilde{u}(a)\in V_{\ell}$ and $\tilde{u}(b)\in V_{r}$. If we apply
the Lagrange identity using $u$ and $v_{i}$, where $v_{i}$ is a
solution of the homogeneous problem $L^{+}v=0$, we get $fv_{i}=\frac{d}{dy}\left(\tilde{u}^{T}Av_{i}\right)$
for $i=1,\ldots,n$. The equations with $i=1,\ldots,\ell$ are integrated
from $a$ to $y$ and the rest are integrated from $y$ to $c$. The
boundary conditions as given by the previous lemma imply that 

\begin{eqnarray*}
\int_{a}^{y}fv_{1} & = & \tilde{u}(y)^{T}A(y)\tilde{v}_{1}(y)\\
 & \cdots\\
\int_{a}^{y}fv_{l} & = & \tilde{u}(y)^{T}A(y)\tilde{v}_{l}(y)\\
-\int_{y}^{c}fv_{l+1} & = & \tilde{u}(y)^{T}A(y)\tilde{v}_{l+1}(y)\\
 & \cdots\\
-\int_{y}^{c}fv_{n} & = & \tilde{u}(y)^{T}A(y)\tilde{v}_{n}(y).
\end{eqnarray*}
The last entry of $A(y)^{T}\tilde{u}$ is equal to $(-1)^{n}u$. Using
Cramer's rule, we get
\begin{equation}
(-1)^{n-1}u=\frac{\det\left(\begin{array}{ccccc}
v_{1} & v_{1}' & \ldots & v_{1}^{(n-2)} & \int_{a}^{y}fv_{1}\\
 &  & \cdots\\
v_{n} & v_{n}' & \ldots & v_{n}^{(n-2)} & -\int_{y}^{c}fv_{n}
\end{array}\right)}{\det\left(\begin{array}{cccc}
v_{1} & v_{1}' & \ldots\  & v_{1}^{(n-1)}\\
v_{2} & v_{2}' & \ldots\  & v_{2}^{(n-1)}\\
 &  & \cdots\\
v_{n} & v_{n}' & \ldots\  & v_{n}^{(n-1)}
\end{array}\right)}.\label{eq:greenfndetform}
\end{equation}
The following lemma gives the Green's function in a more useful form. 
\begin{lem}
The solution of $Lu=f$ subject to the boundary conditions $u(a)\in V_{\ell}$
and $u(c)\in V_{r}$ is given by 
\begin{multline}
u(y)=u_{1}(y)\int_{a}^{y}v_{1}(\eta)f(\eta)\, d\eta+\cdots+u_{l}(y)\int_{a}^{y}v_{l}(\eta)f(\eta)\, d\eta\\
-u_{l+1}(y)\int_{y}^{c}v_{l+1}(\eta)f(\eta)\, d\eta-\cdots-u_{n}(y)\int_{y}^{c}v_{n}(\eta)f(\eta)\, d\eta.\label{eq:Green's-function}
\end{multline}
\end{lem}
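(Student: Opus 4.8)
The plan is to read off \eqref{eq:Green's-function} from the determinantal formula \eqref{eq:greenfndetform} by a single cofactor expansion, using the companion identity \eqref{eq:uj-determinant} to identify the cofactors. First I would expand the numerator determinant in \eqref{eq:greenfndetform} along its last column, whose entries are $\int_a^y f v_j$ for $j\le\ell$ and $-\int_y^c f v_j$ for $j>\ell$. By the Laplace expansion, the coefficient of the row-$j$ entry is $(-1)^{j+n}$ times the minor obtained by deleting row $j$ and the last column; that minor is the $(n-1)\times(n-1)$ determinant whose rows are $(v_i,v_i',\ldots,v_i^{(n-2)})$ for $i\neq j$.

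Next I would note that exactly this minor appears in \eqref{eq:uj-determinant}: expanding the numerator there along its last column, which is the standard basis vector with a $1$ in row $j$, produces $(-1)^{j+n}$ times the very same minor. The denominators of \eqref{eq:greenfndetform} and \eqref{eq:uj-determinant} are identical (the Wronskian-type determinant of $v_1,\ldots,v_n$, which is nonzero since the $v_i$ form a basis of solutions of $L^{+}v=0$), so dividing the cofactor expansion by this common denominator and substituting from \eqref{eq:uj-determinant} expresses each term as a multiple of $u_j$. After the signs cancel, what remains is precisely $u=\sum_{j\le\ell}u_j\int_a^y v_j f-\sum_{j>\ell}u_j\int_y^c v_j f$, that is, \eqref{eq:Green's-function}.

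The one place that needs care is this sign bookkeeping: tracking the prefactor $(-1)^{n-1}$ in \eqref{eq:greenfndetform}, the alternating Laplace sign $(-1)^{j+n}$, and the explicit $(-1)^{n-1}$ in front of \eqref{eq:uj-determinant}, and verifying that their product is $+1$ independently of $j$. Everything else is a mechanical rearrangement.

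As an independent check I would also verify \eqref{eq:Green's-function} directly. Write its right-hand side as $\sum_i w_i(y)u_i(y)$ with $w_i(y)=\int_a^y v_i f$ for $i\le\ell$ and $w_i(y)=-\int_y^c v_i f$ for $i>\ell$, so that $w_i'=v_i f$ in every case. Differentiating repeatedly, the extra terms generated at the $k$-th stage are proportional to $\sum_i u_i^{(k)}(y)v_i(y)$, and the crucial fact is that this sum equals $\delta_{k,n-1}$ for $0\le k\le n-1$; this follows from \eqref{eq:UTAVeqI} once one uses that the last row of $A$ is $(1,0,\ldots,0)$, equivalently $A^{T}e_n=e_1$. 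Consequently $u^{(k)}=\sum_i w_i u_i^{(k)}$ for $k\le n-1$, applying $L$ gives $f$, and the boundary conditions hold because at $y=a$ only the $i>\ell$ terms survive in $\tilde u(a)$, which therefore lies in $V_\ell=\operatorname{span}\{\tilde u_{\ell+1}(a),\ldots,\tilde u_n(a)\}$, while at $y=c$ only the $i\le\ell$ terms survive, placing $\tilde u(c)$ in $V_r$. On this route the main obstacle is the identity $\sum_i u_i^{(k)}v_i=\delta_{k,n-1}$, which rests on the biorthogonality $U^{T}AV=I$ together with the reverse-triangular structure of $A$.
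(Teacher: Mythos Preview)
Your primary argument is correct and is precisely the paper's approach: the paper's proof reads in its entirety ``Use \eqref{eq:uj-determinant} and \eqref{eq:greenfndetform},'' and your cofactor expansion along the last column, with the sign bookkeeping you outline, is exactly how those two identities combine to yield \eqref{eq:Green's-function}. Your second, direct verification is a correct independent check; the key identity $\sum_i u_i^{(k)}v_i=\delta_{k,n-1}$ that you extract from $U^{T}AV=I$ and the structure of $A$ is stated separately in the paper as Lemma~\ref{lem:lemma5}, so you are effectively anticipating that result.
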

\begin{proof}
Use \eqref{eq:uj-determinant} and \eqref{eq:greenfndetform}.
\end{proof}
The following lemma justifies the delta-function interpretation of
Green's functions favored by physicists. It is used in Section 2.3
and Section 4.
\begin{lem}
Let $u_{1},\ldots,u_{n}$ and $v_{1},\ldots,v_{n}$ be bases of solutions
of the homogeneous problems $Lu=0$ and $L^{+}v=0$, respectively,
that are related by $U(y)^{T}A(y)V(y)=I$. Then we have
\begin{eqnarray*}
\sum_{i=1}^{n}u_{i}^{(j)}v_{i} & = & 0\quad\text{for}\quad j=0,1,\ldots,n-2\\
 & = & 1\quad\text{for}\quad j=n-1
\end{eqnarray*}
and
\begin{eqnarray*}
\sum_{i=1}^{n}u_{i}v_{i}^{(j)} & = & 0\quad\text{for}\quad j=0,1,\ldots,n-2\\
 & = & (-1)^{n-1}\quad\text{for}\quad j=n-1.
\end{eqnarray*}
\label{lem:lemma5}\end{lem}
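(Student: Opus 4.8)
The plan is to recognize both families of scalar identities as the first columns of the matrix products $VU^{T}$ and $UV^{T}$, and then to compute those columns from the structure of $A$ alone. Let $e_{1},\dots,e_{n}$ denote the standard basis of column vectors. Since column $i$ of $U$ is $\tilde{u}_{i}$ — so the entry of $U$ in row $p$, column $i$, is $u_{i}^{(p-1)}$ — and likewise for $V$, one has $U^{T}e_{1}=(u_{1},\dots,u_{n})^{T}$ and $V^{T}e_{1}=(v_{1},\dots,v_{n})^{T}$, and therefore
\[
VU^{T}e_{1}=\sum_{i=1}^{n}u_{i}\,\tilde{v}_{i},\qquad UV^{T}e_{1}=\sum_{i=1}^{n}v_{i}\,\tilde{u}_{i}.
\]
Reading these vector identities componentwise, the $(j+1)$-st component of the first is $\sum_{i}u_{i}v_{i}^{(j)}$ and that of the second is $\sum_{i}u_{i}^{(j)}v_{i}$, for $j=0,\dots,n-1$. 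So it suffices to identify the two vectors $VU^{T}e_{1}$ and $UV^{T}e_{1}$ explicitly.

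First I would upgrade the hypothesis $U^{T}AV=I$ (equation \eqref{eq:UTAVeqI}) to a two-sided inverse relation: all three matrices are $n\times n$, so $V$ is the genuine inverse of $U^{T}A$, whence $VU^{T}A=I$ as well. This gives $VU^{T}=A^{-1}$ and, upon transposing, $UV^{T}=(A^{-1})^{T}$. Next I would extract two facts about $A$ straight from the reverse-triangular form displayed just before \eqref{eq:UTAVeqI}: its last column equals $(-1)^{n-1}e_{1}$, because the only nonzero entry of that column is the reverse-diagonal corner entry $(-1)^{n-1}$; and its last row equals $e_{1}^{T}$. From $Ae_{n}=(-1)^{n-1}e_{1}$ one gets $A^{-1}e_{1}=(-1)^{n-1}e_{n}$ (using $(-1)^{2(n-1)}=1$), and from $e_{n}^{T}A=e_{1}^{T}$ one gets $(A^{-1})^{T}e_{1}=e_{n}$.

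Combining the two parts finishes the proof. On the one hand $VU^{T}e_{1}=A^{-1}e_{1}=(-1)^{n-1}e_{n}$, so $\sum_{i}u_{i}v_{i}^{(j)}$ equals $(-1)^{n-1}$ when $j=n-1$ and $0$ otherwise; on the other hand $UV^{T}e_{1}=(A^{-1})^{T}e_{1}=e_{n}$, so $\sum_{i}u_{i}^{(j)}v_{i}$ equals $1$ when $j=n-1$ and $0$ otherwise. The only step carrying any content is the structural claim about the last row and last column of $A$, which is immediate from the matrix written in the text; the main thing to watch is the indexing convention for $A$ (which row carries which sign on the reverse diagonal), and observing that the sign $(-1)^{n-1}$ appearing in the second family of identities is precisely the top-right entry of $A$. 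I expect no real obstacle beyond this bookkeeping.
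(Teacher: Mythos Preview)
Your proof is correct and complete. The paper's one-line proof is ``Use \eqref{eq:vj-determinant} and \eqref{eq:uj-determinant}'': those determinant formulas exhibit each $v_j$ as the $(j,n)$ cofactor of $U^{T}$ divided by its determinant (and similarly $u_j$ in terms of $V^{T}$), after which the identities drop out of the alien-cofactor expansion $\sum_{i}(U^{T})_{i,k}\,C_{i,n}(U^{T})=\delta_{kn}\det U^{T}$. You take a different, more direct route: from $U^{T}AV=I$ you pass straight to $VU^{T}=A^{-1}$ and $UV^{T}=(A^{T})^{-1}$ and then read off the first columns using only the last row and last column of $A$. Your argument is cleaner in that it never invokes Cramer's rule or cofactors at all; the paper's route, on the other hand, ties the result visibly to the Wronskian. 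Both rest on the same hypothesis $U^{T}AV=I$, so the difference is one of packaging rather than substance.
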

\begin{proof}
Use \eqref{eq:vj-determinant} and \eqref{eq:uj-determinant}.
\end{proof}

\subsection{Template boundary value problems}

The first template boundary value problem is 
\[
\left(D^{2}-\beta^{2}\right)u=f
\]
with boundary conditions $u(\pm1)=0$. The Green's function of this
boundary value can be deduced in any number of ways. We have 
\begin{equation}
G(y,\eta)=\frac{-{\rm e}^{\beta\,\left(-2+y+\eta\right)}+{\rm e}^{-\beta\,\left(4-y+\eta\right)}+{\rm e}^{\beta\,\left(-y+\eta\right)}-{\rm e}^{-\beta\,\left(2+y+\eta\right)}}{2\,\beta\,\left({\rm e}^{-4\,\beta}-1\right)}\label{eq:green2}
\end{equation}
for $-1\leq\eta\leq y\leq1$. The Green's function is symmetric and
the solution is given by $u(y)=\int_{-1}^{y}G(y,\eta)\, f(\eta)\, d\eta+\int_{y}^{\eta}G(\eta,y)\, f(\eta)\, d\eta.$
This form of the Green's function suits numerical work because none
of the terms will overflow for even $\beta$ very large. The terms
in the numerator are factored as follows:
\begin{align*}
{\rm e}^{\beta(-2+y+\eta)} & ={\rm e}^{\beta(-1+y)}{\rm e}^{\beta(-1+\eta)}\\
{\rm e}^{-\beta(4-y+\eta)} & ={\rm e}^{-2\beta}{\rm e}^{-\beta(1-y)}{\rm e}^{-\beta(1+\eta)}\\
{\rm e}^{-\beta(2+y+\eta)} & ={\rm e}^{-\beta(1+y)}{\rm e}^{-\beta(1+\eta)}.
\end{align*}
None of the factors will overflow even for large $\beta$. The term
${\rm e}^{\beta(-y+\eta)}$ is not factored and will not overflow
because $\eta\leq y$. Using these factorizations and noting that
$y$ and $\eta$ must be exchanged to get the Green's function for
$y\leq\eta$, we infer that the evaluation of the solution of the
first template problem using $u=\int_{-1}^{y}G(y,\eta)\, f(\eta)\, d\eta+\int_{y}^{\eta}G(\eta,y)\, f(\eta)\, d\eta$
reduces to the quadratures
\begin{equation}
\int_{-1}^{y}{\rm e}^{-\mu(\eta+1)}f(\eta)\, d\eta,\:\int_{y}^{1}{\rm e}^{-\mu(\eta+1)}f(\eta)\, d\eta,\;\int_{-1}^{y}{\rm e}^{\mu(-1+\eta)}f(\eta)\, d\eta,\;\int_{y}^{1}{\rm e}^{\mu(-1+\eta)}f(\eta)\, d\eta\label{eq:quadrature-1}
\end{equation}
and 
\begin{equation}
\int_{-1}^{+1}{\rm e}^{-\mu|y-\eta|}f(\eta)\, d\eta\label{eq:quadrature-2}
\end{equation}
with $\mu=\beta$. Each one of these quadratures yields a function
of $y$ and must be multiplied by a prefactor which is also a function
$y.$ For example, the first term of \eqref{eq:green2} contributes
the prefactor $-{\rm e}^{\beta(-1+y)}/2\beta({\rm e}^{-4\beta}-1)$
to $\int_{-1}^{y}{\rm e}^{\beta(-1+\eta)}f(\eta)\, d\eta$. Since
the term is unchanged when $y$ and $\eta$ are interchanged to obtain
the Green's function in $y\leq\eta$ region, it contributes the same
prefactor to $\int_{y}^{1}{\rm e}^{\beta(-1+\eta)}f(\eta)\, d\eta$.
The prefactor of the function defined by \eqref{eq:quadrature-2}
with $\gamma=\beta$ is $1/2\beta({\rm e^{-4\beta}-1)}$. Unlike the
result of the quadratures \eqref{eq:quadrature-1} and \eqref{eq:quadrature-2},
the prefactors do not depend upon $f$ and can be computed and stored
in advance. Thus the cost of solving the first template problem is
very nearly equal to the cost of the quadratures \eqref{eq:quadrature-1}
and \eqref{eq:quadrature-2} with $\gamma=\beta$.

The second template problem is the fourth order boundary value problem
\[
\left(D^{2}-\beta^{2}\right)\left(D^{2}-\alpha^{2}\right)u=f
\]
 with boundary conditions $u(\pm1)=u'(\pm1)=0$. For this problem,
it takes more work to write the Green's function in such a way that
there are no numerical overflows even if $\alpha$ and $\beta$ are
very large. However, the final result is similar to what we have seen
for the first template problem. The evaluation of $u$ can be reduced
to the quadratures \eqref{eq:quadrature-1} and \eqref{eq:quadrature-2}
with $\gamma=\alpha$ and $\gamma=\beta$. The results of quadratures
are multiplied by prefactors and summed to obtain $u$. 

We now turn to the derivation of the $4\times4$ matrix shown in Figure
\ref{fig:EntriesOf4x4Matrix}. That matrix is useful for computing
the prefactors.

Like the first template problem, the second template problem is self-adjoint.
We take the basis of homogeneous solutions to be 
\begin{align*}
u_{1} & ={\rm e}^{\beta\,\left(y-1\right)}+{\rm e}^{-\beta\,\left(y-1\right)}-{\rm e}^{\alpha\,\left(y-1\right)}-{\rm e}^{-\alpha\,\left(y-1\right)}\\
u_{2} & =\alpha\,{{\rm e}^{\beta\,\left(y-1\right)}}-\alpha\,{{\rm e}^{-\beta\,\left(y-1\right)}}-\beta\,{{\rm e}^{\alpha\,\left(y-1\right)}}+\beta\,{{\rm e}^{-\alpha\,\left(y-1\right)}}\\
u_{3} & ={{\rm e}^{\beta\,\left(y+1\right)}}+{{\rm e}^{-\beta\,\left(y+1\right)}}-{{\rm e}^{\alpha\,\left(y+1\right)}}-{{\rm e}^{-\alpha\,\left(y+1\right)}}\\
u_{4} & =\alpha\,{{\rm e}^{\beta\,\left(y+1\right)}}-\alpha\,{{\rm e}^{-\beta\,\left(y+1\right)}}-\beta\,{{\rm e}^{\alpha\,\left(y+1\right)}}+\beta\,{{\rm e}^{-\alpha\,\left(y+1\right)}}.
\end{align*}
It may be verified that $u_{1}$ and $u_{2}$ satisfy the right boundary
conditions while $u_{3}$ and $u_{4}$ satisfy the left boundary conditions
as assumed in Section 2.1. The functions $v_{1},v_{2},v_{3},v_{4}$
may be calculated using \eqref{eq:vj-determinant} or Lemma \ref{lem:lemma3}.
It is convenient to define 
\[
W=-4\,\alpha\,\beta\,{\delta}^{2}{\sigma}^{2}{{\rm e}^{-4\,\alpha}}-4\,\alpha\,\beta\,{\delta}^{2}{\sigma}^{2}{{\rm e}^{-4\,\beta}}+4\,{\frac{\alpha\,\beta\,{\delta}^{4}{{\rm e}^{-4\,\beta-4\,\alpha}}}{{\sigma}^{2}}}+32\,{\alpha}^{2}{\beta}^{2}{\delta}^{2}{{\rm e}^{-2\,\beta-2\,\alpha}}+4\,{\frac{\alpha\,\beta\,{\delta}^{4}}{{\sigma}^{2}}}
\]
where $\delta=\alpha^{2}-\beta^{2}$ and $\sigma=\alpha+\beta$ (the
Wronskian is equal to ${\rm e}^{2\alpha+2\beta}W$). The function
$v_{1}$ is equal to{\small 
\begin{multline*}
2\,\alpha\,\beta\,\left(\beta+\alpha\right)\left(-\beta+\alpha\right)^{2}{{\rm e}^{\alpha\,\left(y-1\right)}}-2\,\alpha\,\beta\,\left(\beta+\alpha\right)\left(-\beta+\alpha\right)^{2}{{\rm e}^{\beta\,\left(y-1\right)}}+2\,\alpha\,\beta\,\left(-\beta+\alpha\right)\left(\beta+\alpha\right)^{2}{{\rm e}^{-\alpha\,\left(y+3\right)}}\\
+2\,\alpha\,\beta\,\left(-\beta+\alpha\right)\left(\beta+\alpha\right)^{2}{{\rm e}^{-\beta\,\left(y+3\right)}}+2\,\alpha\,\beta\,\left(\beta+\alpha\right)\left(-\beta+\alpha\right)^{2}{{\rm e}^{-4\,\alpha-\beta\, y-3\,\beta}}-2\,\alpha\,\beta\,\left(-\beta+\alpha\right)\left(\beta+\alpha\right)^{2}{{\rm e}^{-4\,\alpha+\beta\, y-\beta}}\\
-4\,{\alpha}^{2}\beta\,\left(-\beta+\alpha\right)\left(\beta+\alpha\right){{\rm e}^{-2\,\alpha-\beta\, y-\beta}}-2\,\alpha\,\beta\,\left(\beta+\alpha\right)\left(-\beta+\alpha\right)^{2}{{\rm e}^{-4\,\beta-\alpha\, y-3\,\alpha}}-2\,\alpha\,\beta\,\left(-\beta+\alpha\right)\left(\beta+\alpha\right)^{2}{{\rm e}^{-4\,\beta+\alpha\, y-\alpha}}\\
-4\,\alpha\,{\beta}^{2}\left(-\beta+\alpha\right)\left(\beta+\alpha\right){{\rm e}^{-2\,\beta-\alpha\, y-\alpha}}+4\,\alpha\,{\beta}^{2}\left(-\beta+\alpha\right)\left(\beta+\alpha\right){{\rm e}^{\alpha\, y-3\,\alpha-2\,\beta}}+4\,{\alpha}^{2}\beta\,\left(-\beta+\alpha\right)\left(\beta+\alpha\right){{\rm e}^{\beta\, y-3\,\beta-2\,\alpha}}
\end{multline*}
}divided by $W$. The expression for $v_{2}$ is similarly long.

{\scriptsize }
\begin{figure}
\begin{centering}
{\scriptsize 
\[
\left(\begin{array}{cccc}
-2\,\alpha\,{\delta}^{2}+2\,\alpha\,{\delta}^{2}{{\rm e}^{-4\,\alpha}} & -8\,\beta\,{\alpha}^{2}\delta\,{{\rm e}^{-2\,\beta-2\,\alpha}}+2\,\alpha\,\delta\,{\sigma}^{2}{{\rm e}^{-4\,\alpha}}-2\,{\frac{\alpha\,{\delta}^{3}}{{\sigma}^{2}}}\\
8\,\beta\,{\alpha}^{2}\delta\,{{\rm e}^{-2\,\alpha}}-2\,\alpha\,\delta\,{\sigma}^{2}{{\rm e}^{-2\,\beta}}+2\,{\frac{\alpha\,{\delta}^{3}{{\rm e}^{-4\,\alpha-2\,\beta}}}{{\sigma}^{2}}} & -2\,\alpha\,{\delta}^{2}+2\,\alpha\,{\delta}^{2}{{\rm e}^{-4\,\alpha}}\\
-4\,{\frac{\beta\,\alpha\,{\delta}^{2}{{\rm e}^{-2\,\beta-2\,\alpha}}}{\sigma}}+4\,{\frac{\beta\,\alpha\,{\delta}^{2}}{\sigma}} & -4\,\beta\,\alpha\,\delta\,\sigma\,{{\rm e}^{-2\,\alpha}}+4\,\beta\,\alpha\,\delta\,\sigma\,{{\rm e}^{-2\,\beta}}\\
-4\,\beta\,\alpha\,\delta\,\sigma\,{{\rm e}^{-2\,\alpha}}+4\,\beta\,\alpha\,\delta\,\sigma\,{{\rm e}^{-2\,\beta}} & -4\,{\frac{\beta\,\alpha\,{\delta}^{2}{{\rm e}^{-2\,\beta-2\,\alpha}}}{\sigma}}+4\,{\frac{\beta\,\alpha\,{\delta}^{2}}{\sigma}}
\end{array}\right)
\]
}
\par\end{centering}{\scriptsize \par}

\begin{centering}
{\scriptsize 
\[
\left(\begin{array}{cccc}
 &  & -4\,{\frac{\beta\,\alpha\,{\delta}^{2}{{\rm e}^{-2\,\beta-2\,\alpha}}}{\sigma}}+4\,{\frac{\beta\,\alpha\,{\delta}^{2}}{\sigma}} & -4\,\beta\,\alpha\,\delta\,\sigma\,{{\rm e}^{-2\,\alpha}}+4\,\beta\,\alpha\,\delta\,\sigma\,{{\rm e}^{-2\,\beta}}\\
 &  & -4\,\beta\,\alpha\,\delta\,\sigma\,{{\rm e}^{-2\,\alpha}}+4\,\beta\,\alpha\,\delta\,\sigma\,{{\rm e}^{-2\,\beta}} & -4\,{\frac{\beta\,\alpha\,{\delta}^{2}{{\rm e}^{-2\,\beta-2\,\alpha}}}{\sigma}}+4\,{\frac{\beta\,\alpha\,{\delta}^{2}}{\sigma}}\\
 &  & -2\,\beta\,{\delta}^{2}+2\,\beta\,{\delta}^{2}{{\rm e}^{-4\,\beta}} & 8\,{\beta}^{2}\alpha\,\delta\,{{\rm e}^{-2\,\beta-2\,\alpha}}+2\,{\frac{\beta\,{\delta}^{3}}{{\sigma}^{2}}}-2\,\beta\,\delta\,{\sigma}^{2}{{\rm e}^{-4\,\beta}}\\
 &  & -8\,{\beta}^{2}\alpha\,\delta\,{{\rm e}^{-2\,\beta}}-2\,{\frac{\beta\,{\delta}^{3}{{\rm e}^{-4\,\beta-2\,\alpha}}}{{\sigma}^{2}}}+2\,\beta\,\delta\,{\sigma}^{2}{{\rm e}^{-2\,\alpha}} & -2\,\beta\,{\delta}^{2}+2\,\beta\,{\delta}^{2}{{\rm e}^{-4\,\beta}}
\end{array}\right)
\]
}
\par\end{centering}{\scriptsize \par}

{\scriptsize \caption{Entries of a $4\times4$ matrix with the first two columns placed
above the last two. This matrix determines the Green's function of
the second template problem. Here $\delta=\alpha^{2}-\beta^{2}$ and
$\sigma=\alpha+\beta$.\label{fig:EntriesOf4x4Matrix}}
}
\end{figure}
{\scriptsize \par}

For $\eta\leq y$, the Green's function is given by $G(y,\eta)=u_{1}(y)v_{1}(\eta)+u_{2}(y)v_{2}(\eta).$
This Green's function is determined by the $4\times4$ matrix shown
in Figure \ref{fig:EntriesOf4x4Matrix}. We think of the rows and
columns of the matrix as being indexed by $-\beta,\beta,-\alpha,\alpha$
in that order. The $(-\beta,-\beta)$ entry, which appears in the
top left corner, is divided by $W$ to get the coefficient of ${\rm e}^{-\beta(y+1)}{\rm e}^{-\beta(\eta+1)}$
in the expression for $G(y,\eta)$ for $\eta\leq y$. The other entries
are interpreted similarly but there are two special entries. These
are the $(-\beta,\beta)$ entry which must be interpreted as $W$
times the coefficient of ${\rm e}^{\beta(\eta-y)}$ and the $(-\alpha,\alpha)$
entry which must be interpreted as $W$ times the coefficient of ${\rm e}^{\alpha(\eta-y)}$.
The Green's function for $y\leq\eta$ is obtained from symmetry.

It follows that solving the second template problem $\left(D^{2}-\beta^{2}\right)\left(D^{2}-\alpha^{2}\right)u=f$
with boundary conditions $u(\pm1)=u'(\pm1)=0$ reduces to quadratures
\eqref{eq:quadrature-1} and \eqref{eq:quadrature-2} with $\gamma=\alpha$
and $\gamma=\beta$. Each quadrature yields a function of $y$ which
is multiplied by a prefactor. The prefactor is determined using the
$4\times4$ matrix of Figure \ref{fig:EntriesOf4x4Matrix} and the
formula for $W$. 

The formula for $W$ and the entries of the $4\times4$ matrix use
$\delta$ and $\sigma$ to avoid cancellation errors. Because of the
way the parameters $\alpha$ and $\beta$ arise in the numerical integration
of channel flow, $\delta=\alpha^{2}-\beta^{2}$ can be evaluated accurately.

Each of the quadratures \eqref{eq:quadrature-1} and \eqref{eq:quadrature-2}
is well-conditioned. However, if $\alpha\approx\beta$ there will
be large cancellation errors when the results of the quadratures are
multiplied by prefactors and summed. This phenomenon may be understood
as follows. When $\alpha\neq\beta$, the solutions ${\rm e}^{\pm\alpha y},{\rm e}^{\pm\beta y}$
form a basis of homogeneous solutions. When $\alpha=\beta$, the basis
is ${\rm e}^{\pm\alpha y},y{\rm e}^{\pm\alpha y}$. When $\alpha\approx\beta$,
the Green's function tries to produce terms which resemble $y{\rm e}^{\alpha y}$
using terms such as $\left({\rm e}^{\alpha y}-{\rm e}^{\beta y}\right)/\left(\alpha-\beta\right)$
resulting in large cancellation errors. Fortunately, this situation
does not arise in channel flow or plane Couette flow.

\subsection{Derivatives using Green's functions}

For the template boundary value problems $Lu=f$, we have derived
Green's functions such that $u(y)=\int_{-1}^{1}G(y,\eta)\, f(\eta)\, d\eta$.
Here we will consider the use of Green's functions to evaluate derivatives
such as $du/dy.$

The ability to differentiate solutions of boundary value problems
using Green's functions has been utilized in an important paper by
Greengard and Rokhlin \cite{GreengardRokhlin1991}. They consider
the boundary value problem $u''+p(y)u'+q(y)u=f(y)$ and solve it by
representing the solution $u$ in the form $u=\int_{-1}^{1}G(y,\eta)\,\sigma(\eta)\, d\eta$,
where $G$ is the Green's function of a linear boundary value problem
with constant coefficients which satisfies the same boundary conditions.
In fact, the background boundary value problem is simply taken to
be $u''=f$. With the representation of $u$ using $\sigma$, the
boundary value problem becomes an integral equation for $\sigma$.
Starr and Rokhlin \cite{StarrRokhlin1994} have generalized the method
to first order systems. The papers by Greengard, Rokhlin, and Starr
show how to apply numerical methods based on Green's functions to
problems with non-constant coefficients. Once the boundary value problem
is cast in integral form using the background Green's function, the
method handles diagonal blocks using Nyström integration and pieces
together the global solution efficiently by exploiting the low rank
of the off-diagonal blocks. 

We derive integral formulas for derivatives of solutions of both the
second and fourth order template boundary value problems. In addition,
we consider boundary value problems of the type $Lu=df_{1}/dy$ and
$Lu=d^{2}f_{2}/dy^{2}$ and show how to get the solution $u$ as well
as its derivatives without numerically differentiating $f_{1}$ or
$f_{2}$. In Section 4, these calculations are used to show that numerical
differentiation in the wall-normal or $y$ direction can be entirely
eliminated in the numerical integration of channel flow. 

For the template second order problem, which is $Lu=d^{2}u/dy^{2}-\beta^{2}u=f$
with boundaries $u(\pm1)=0$, we take the Green's function to be $G(y,\eta)=u_{1}(y)v_{1}(\eta)$
for $-1\leq\eta\leq y\leq1$. The Green's function for $-1\leq y\leq\eta\leq1$
is taken to be $G(\eta,y)$ since the problem is symmetric or self-adjoint.
The function $u_{1}$ satisfies the right boundary condition and the
relationship between $u_{1},u_{2}$ and $v_{1},v_{2}$ is as given
in Section 2.1. The Green's function for $-1\leq y\leq\eta\leq1$
is also given by $-u_{2}(y)v_{2}(\eta)$. As a consequence of symmetry,
we have $-u_{2}(\eta)v_{2}(y)=u_{1}(y)v_{1}(\eta)$.

If $G(y,\eta)=u_{1}(y)v_{1}(\eta)$, we have $G_{1}(y,y)=G_{2}(y,y)+1$,
where subscripts of $G$ denote partials with respect to the first
or second argument. This follows from symmetry and Lemma \ref{lem:lemma5}.
In addition, we have $G(1,y)=G(y,-1)=0$ because $u_{1}$ satisfies
the right boundary condition and $v_{1}$ satisfies the left boundary
condition. 

The solution of $Lu=f$ is given by 
\begin{equation}
u(y)=\int_{-1}^{y}G(y,\eta)\, f(\eta)\, d\eta+\int_{y}^{1}G(\eta,y)\, f(\eta)\, d\eta.\label{eq:green2-f2u}
\end{equation}
Differentiating with respect to $y$, we get 
\begin{equation}
u'(y)=\int_{-1}^{y}G_{1}(y,\eta)\, f(\eta)\, d\eta+\int_{y}^{1}G_{2}(\eta,y)\, f(\eta)\, d\eta\label{eq:green2-f2du}
\end{equation}
where subscripts of $G$ stand for partial differentiation. The integral
equation is no longer symmetric in $y$ and $\eta$. Suppose the boundary
value problem is $Lu=df_{1}/dy$. We may substitute $f_{1}'$ for
$f$ in \eqref{eq:green2-f2u} and integrate by parts to get 
\begin{equation}
u(y)=-\int_{-1}^{y}G_{2}(y,\eta)\, f_{1}(\eta)\, d\eta-\int_{y}^{1}G_{1}(\eta,y)\, f_{1}(\eta)\, d\eta.\label{eq:green2-df2u}
\end{equation}
This integral equation for $u(y)$ is not symmetric. Differentiating
with respect to $y$ and using $G_{1}(y,y)=G_{2}(y,y)+1$, we get
\begin{equation}
\frac{du}{dy}=-\int_{-1}^{y}G_{12}(y,\eta)\, f_{1}(\eta)\, d\eta-\int_{y}^{1}G_{12}(\eta,y)\, f_{1}(\eta)\, d\eta+f_{1}(y).\label{eq:green2-df2du}
\end{equation}

The template fourth order problem is $Lu=(D^{2}-\beta^{2})(D^{2}-\alpha^{2})u=f$
with boundary conditions $u(\pm1)=u'(\pm1)=0$. We again take the
Green's function to be $G(y,\eta)$ for $-1\leq\eta\leq y\leq1$.
From Section 2.1, we have $G(y,\eta)=u_{1}(y)v_{1}(\eta)+u_{2}(y)v_{2}(\eta)$.
Figure \ref{fig:EntriesOf4x4Matrix} gives the coefficients of the
Green's function as explained in Section 2.2. The functions $u_{1}(y)$
and $u_{2}(y)$ satisfy the right boundary condition. Using symmetry,
we take the Green's function for $-1\leq y\leq\eta\leq1$ to be $G(\eta,y)$.
As a consequence of symmetry, we have 
\[
u_{1}(y)v_{1}(\eta)+u_{2}(y)v_{2}(\eta)=-u_{3}(\eta)v_{3}(y)-u_{4}(\eta)v_{4}(y).
\]
Using this identity and Lemma \ref{lem:lemma5}, we deduce that 
\begin{align*}
G_{1}(y,y) & =G_{2}(y,y)\\
G_{11}(y,y) & =G_{22}(y,y)\\
G_{111}(y,y) & =G_{222}(y,y)+1.
\end{align*}
Here the subscripts of $G$ denote partial differentiation with $1$
and $2$ standing for the first and second arguments of $G$. Since
$u_{1}$ and $u_{2}$ satisfy the right boundary conditions while
$v_{1}$ and $v_{2}$ satisfy the left boundary conditions, we have
\[
G(1,y)=G_{1}(1,y)=G(y,-1)=G_{2}(y,-1)=0.
\]
In other words, the Green's function satisfies the boundary conditions.

The solution of $Lu=f$, with the boundary conditions associated with
the fourth order template problem, are given by 
\begin{equation}
u(y)=\int_{-1}^{y}G(y,\eta)\, f(\eta)\, d\eta+\int_{y}^{1}G(\eta,y)\, f(\eta)\, d\eta\label{eq:green4-f2u}
\end{equation}
as before. Differentiating with respect to $y$ gives
\begin{align}
\frac{du}{dy} & =\int_{-1}^{y}G_{1}(y,\eta)\, f(\eta)\, d\eta+\int_{y}^{1}G_{2}(\eta,y)\, f(\eta)\, d\eta\label{eq:green4-f2duddu}\\
\frac{d^{2}u}{dy^{2}} & =\int_{-1}^{y}G_{11}(y,\eta)\, f(\eta)\, d\eta+\int_{y}^{1}G_{22}(\eta,y)\, f(\eta)\, d\eta.
\end{align}
The subscripts of $G$ denote differentiation. If the fourth order
template problem is of the form $Lu=df_{1}/dy$, its solution is given
by 
\begin{equation}
u(y)=-\int_{-1}^{y}G_{2}(y,\eta)\, f_{1}(\eta)\, d\eta-\int_{y}^{1}G_{1}(\eta,y)\, f_{1}(\eta)\, d\eta.\label{eq:green4-df2u}
\end{equation}
This form of the solution is obtained after substituting $df_{1}/dt$
for $f$ in \eqref{eq:green4-f2u} and then integrating by parts.
The boundary terms vanish. Differentiating with respect to $y$, we
get 
\begin{align}
\frac{du}{dy} & =-\int_{-1}^{y}G_{12}(y,\eta)\, f_{1}(\eta)\, d\eta-\int_{y}^{1}G_{12}(\eta,y)\, f_{1}(\eta)\, d\eta.\label{eq:green4-df2duddu}\\
\frac{d^{2}u}{dy^{2}} & =-\int_{-1}^{y}G_{112}(y,\eta)\, f_{1}(\eta)\, d\eta-\int_{y}^{1}G_{122}(\eta,y)\, f_{1}(\eta)\, d\eta.
\end{align}
The boundary terms vanish on both occasions. If the template fourth
order boundary value problem is in the form $Lu=d^{2}f_{2}/dy^{2}$,
the analogous formulas are as follows:
\begin{align}
u(y) & =\int_{-1}^{y}G_{22}(y,\eta)\, f_{2}(\eta)\, d\eta+\int_{y}^{1}G_{11}(\eta,y)\, f_{2}(\eta)\, d\eta\nonumber \\
\frac{du}{dy} & =\int_{-1}^{y}G_{122}(y,\eta)\, f_{2}(\eta)\, d\eta+\int_{y}^{1}G_{112}(\eta,y)\, f_{2}(\eta)\, d\eta\nonumber \\
\frac{d^{2}u}{dy^{2}} & =\int_{-1}^{y}G_{1122}(y,\eta)\, f_{2}(\eta)\, d\eta+\int_{y}^{1}G_{1122}(\eta,y)\, f_{2}(\eta)\, d\eta+(G_{122}(y,y)-G_{112}(y,y))\, f_{2}(y).\label{eq:green4-ddf}
\end{align}
These formulas are derived using the properties of $G$ given in the
previous paragraph.

Formulas \eqref{eq:green2-f2u} through \eqref{eq:green4-ddf} give
a method to compute solutions and solution derivatives without numerical
differentiation even when the right hand side of the boundary value
problem is given as a derivative. If the right hand is $df_{1}/dy$,
these formulas use $f_{1}$ and not $df_{1}/dy$. The derivatives
are transferred to the Green's function which can be differentiated
analytically. In the case of the template fourth order problem, the
kernels of the formulas can be described using a matrix such as the
one displayed in Figure \ref{fig:EntriesOf4x4Matrix}. In fact the
kernels can be obtained by multiplying the entries of that matrix
with suitable powers of $\alpha$ and $\beta.$ With such a representation
the kernels can be evaluated in a numerically stable way as described
in Section 2.2.

The numerical evaluation of formulas \eqref{eq:green2-f2u} through
\eqref{eq:green4-ddf} is affected by discretization and rounding
errors in varying ways. To avoid writing down long formulas, we limit
the discussion of numerical errors to the template second order problem
and note that very similar issues arise for the template fourth order
boundary value problem.

When the integral formulation is used, the solution of the template
second order problem $(D^{2}-\beta^{2})u=f$ at the boundary point
$y=1$ is obtained as the sum of the following four terms:
\begin{equation}
\int_{-1}^{1}\frac{\mathrm{e}^{\beta(\eta-1)}f(\eta)}{2\beta(1-{\rm e}^{-4\beta})}\, d\eta,\int_{-1}^{1}\frac{{\rm e}^{-2\beta}{\rm e}^{-\beta(\eta+1)}f(\eta)}{2\beta(1-{\rm e}^{-4\beta})}\, d\eta,-\int_{-1}^{1}\frac{{\rm e}^{-2\beta}{\rm e}^{-\beta(\eta+1)}\, f(\eta)}{2\beta(1-{\rm e}^{-4\beta})}\, d\eta,-\int_{-1}^{1}\frac{{\rm e}^{\beta(\eta-1)}\, f(\eta)}{2\beta(1-{\rm e}^{-4\beta})}\, d\eta.\label{eq:bndrycancelsat1}
\end{equation}
The second and third terms are exceedingly small even for moderate
$\beta$. The main contribution to numerical error is from the exact
cancellation between the first and the last terms. The magnitude of
the first or the last term is of the order $\left|f\right|_{\infty}/\beta^{2}$.
If the quadrature rule is a very good one, each of the integrals may
be evaluated with an error of around $\left|f\right|_{\infty}\beta^{-2}\epsilon_{machine}$.
If such a quadrature rule is devised, the error in the boundary layer
will also be of the same order. If we suppose $f\equiv\beta^{2}$,
then the exact formula will look like $1-{\rm e}^{\beta(y-1)}$ near
the $y=1$. Since $1$ is a special number in machine arithmetic,
the subtraction $y-1$ will be exact at $y=1$ but not at other nearby
points. If we take the subtraction error at $y=1$ to follow the same
model as at other points, we get the error in the boundary layer using
the exact formula to be of the order $\left|1-{\rm e}^{\beta\epsilon}\right|\approx\beta\epsilon$
or $\left|f\right|_{\infty}\epsilon_{machine}/\beta$. Thus the integral
form has the potential to be more accurate in the boundary layer than
even the mathematically exact formula. Here we envisage quadrature
rules for the sort of integrals that occur in \eqref{eq:bndrycancelsat1}
which take into account the occurrence of terms such as ${\rm e}^{-\beta(\eta+1)}$
in the integrands and whose weights and nodes are computed using extended
precision.

The use of formulas such as \eqref{eq:green2-f2du} to compute the
derivative $du/dy$ is especially accurate in the boundary layer.
For instance, at $y=1$ the first term of \eqref{eq:bndrycancelsat1}
gets multiplied by $\beta$ and the last term by $-\beta$ with the
result that there is no cancellation error in the boundary layer.
In view of this observation, some of the errors reported in Table
7 of \cite{GreengardRokhlin1991} may appear a little high for the
function derivative. 

Finally, we consider a type of numerical error that arises in formulas
such as \eqref{eq:green4-df2u} that express the solution of $Lu=df_{1}/dy$
in integral form  without differentiation of $f_{1}$. If $\beta$
is large in the template second order problem $(D^{2}-\beta^{2})u=f$,
the solution satisfies $u\approx-f/\beta^{2}$ away from the boundary.
Thus if $f$ is given in the form $df_{1}/dy$, the solution will
satisfy $u\approx-\frac{df_{1}}{dy}\beta^{-2}$ and a formula such
as \eqref{eq:green4-df2u} essentially has to produce the derivative
of $f_{1}$ away from the boundary using integration. Differentiation
is defined by subtracting nearly equal quantities and the cancellation
errors inherent in that process cannot go away entirely. The same
comment applies to formulas such as \eqref{eq:green2-f2du} which
produce solution derivatives using an integral formula or to the evaluation
of solution derivatives using the background Green's function as in
\cite{GreengardRokhlin1991} or to the method of spectral integration
discussed in the introduction. The principle contribution to the solution
of $(D^{2}-\beta^{2})u=f$ for large $\beta$ and away from the boundary
is due to the term 
\[
\frac{1}{2\beta({\rm e}^{-4\beta}-1)}\int_{-1}^{1}{\rm e}^{-\beta|\eta-y|}\, f(\eta)\, d\eta.
\]
This is the term which makes the solution look like $-f\beta^{-2}$
away from the boundary. When \eqref{eq:green2-f2du} is used to evaluate
$du/dy$ with $u$ being the solution of $Lu=f$, the leading contribution
is from the two terms
\[
\frac{1}{2\beta({\rm e}^{-4\beta}-1)}\left(\int_{-1}^{y}{\rm e}^{\beta(\eta-y)}\, f(\eta)\, d\eta-\int_{y}^{1}{\rm e}^{\beta(y-\eta)}\, f(\eta)\, d\eta\right).
\]
Here the cancellation error we are looking for is evident. A numerical
method that uses integral formulas to evaluate solution derivatives
or to eliminate derivatives that appear on the right hand side would
benefit by treating such terms together, especially when quadrature
rules are derived.

\section{Time integration of the Navier-Stokes equations }

Let ${\bf u}=(u,v,w)$ be the velocity field of channel flow or plane
Couette flow. We assume the domain to be periodic in the wall-parallel
directions with periods equal to $2\pi\Lambda_{x}$ and $2\pi\Lambda_{z}$
in the streamwise and spanwise directions, respectively. The Fourier
decomposition of the velocity field is given by
\[
{\bf u}=\sum_{\ell=-L/2}^{L/2}\sum_{n=N/2}^{N/2}\hat{{\bf u}}_{\ell,n}(y){\rm e}^{i\ell x/\Lambda_{x}+inz/\Lambda_{z}}.
\]
This Fourier decomposition assumes the number grid points in the streamwise
and spanwise directions to be $L$ and $N$. The notation $\hat{{\bf u}}_{\ell,n}$
denotes a Fourier coefficient of the entire velocity field. Similarly,
$\hat{u}_{\ell,n}$, $\hat{v}_{\ell,n}$, $\hat{w}_{\ell,n}$ denote
the Fourier coefficients of the streamwise, wall-normal, and spanwise
components of the velocity field, respectively. The components of
the vorticity $\nabla\times{\bf u}$ are denoted by $\omega_{x},\,\omega_{y},\,\omega_{z}$
and their Fourier components are denoted similarly. 

Often which modes $\ell$ and $n$ apply is clear from context and
the Fourier modes are indicated as $\hat{u}$, $\hat{v}$, $\hat{w}$
without subscripts. The $\ell=n=0$ modes are the mean modes and are
denoted using an over-bar. For example, the mean mode of the streamwise
velocity is $\bar{u}$. In both the flows considered here, the range
of the $y$ variable is $-1\leq y\leq1$, with the walls located at
$y=\pm1$.

\subsection{The Kim-Moin-Moser equations}

We take the Navier-Stokes equations to be $\partial{\bf u}/\partial t+{\bf H}=-\nabla p+\triangle{\bf u}/Re$,
with ${\bf H}=(H_{1},H_{2},H_{3})$ being the nonlinear term. Both
the Kleiser-Schumann \cite{KleiserSchumann1980}and Kim-Moin-Moser
\cite{KimMoinMoser1987} methods begin by substituting the truncated
Fourier expansion of the velocity field ${\bf u}$. The various Fourier
modes are coupled through the nonlinear term. The nonlinear term is
dealiased using the $3/2$ rule \cite{Boyd2001}. 

Both the methods use identical equations for the mean streamwise velocity
and mean spanwise velocity:

\begin{equation}
\frac{\partial\bar{u}}{\partial t}=-\bar{H}_{1}+p_{g}+\frac{1}{Re}\frac{\partial^{2}\bar{u}}{\partial y^{2}}\quad\quad\quad\frac{\partial\bar{w}}{\partial t}=-\bar{H}_{3}+\frac{1}{Re}\frac{\partial^{2}\bar{w}}{\partial y^{2}}\label{eq:kmm-mean-modes}
\end{equation}
For plane Couette flow $p_{g}=0$ and the boundary conditions are
$\bar{u}(\pm1)=\pm1$ and $\bar{w}(\pm1)=0$. For channel flow, $\bar{u}(\pm1)=\bar{w}(\pm1)=0$
but $p_{g}$ is non-zero. We may take $p_{g}=2/Re$ and maintain a
constant pressure gradient or we may take 
\begin{equation}
p_{g}=\frac{1}{2}\int_{-1}^{+1}\bar{H}_{1}\, dy-\frac{1}{2Re}\frac{\partial\bar{u}}{\partial y}\Biggl|_{y=-1}^{y=1}\label{eq:kmm-pg}
\end{equation}
and keep the streamwise mass flux $\frac{1}{2}\int_{-1}^{+1}\bar{u}\, dy$
constant at $2/3$. The laminar solution of channel flow is ${\bf u}=(1-y^{2},0,0)$
in both cases. 

The equations for the $(\ell,n)$ mode are 
\begin{align*}
\frac{\partial\hat{u}}{\partial t}+\hat{H}_{1} & =-\left(\frac{i\ell}{\Lambda_{x}}\right)\hat{p}+\left(D^{2}-\frac{\ell^{2}}{\Lambda_{x}^{2}}-\frac{n^{2}}{\Lambda_{z}^{2}}\right)\hat{u}\\
\frac{\partial\hat{v}}{\partial t}+\hat{H}_{2} & =-\frac{\partial\hat{p}}{\partial y}+\left(D^{2}-\frac{\ell^{2}}{\Lambda_{x}^{2}}-\frac{n^{2}}{\Lambda_{z}^{2}}\right)\hat{v}\\
\frac{\partial\hat{w}}{\partial t}+\hat{H}_{3} & =-\left(\frac{in}{\Lambda_{z}}\right)\hat{p}+\left(D^{2}-\frac{\ell^{2}}{\Lambda_{x}^{2}}-\frac{n^{2}}{\Lambda_{z}^{2}}\right)\hat{w}.
\end{align*}
Here all the hatted variables are Fourier coefficients of the $(\ell,n)$
mode and are functions of $y$. As before $D$ denotes $d/dy$. The
incompressibility constraint $\nabla.{\bf u}=0$ gives $i\ell\hat{u}/\Lambda_{x}+\partial\hat{v}/\partial y+in\hat{w}/\Lambda_{z}=0$.
The equations are solved in this form by the Kleiser-Schumann method.
In the Kim-Moin-Moser method these equations are altered to 
\begin{align}
\frac{\partial\hat{\omega_{y}}}{\partial t}+\hat{H}_{4} & =\frac{1}{Re}\left(D^{2}-\frac{\ell^{2}}{\Lambda_{x}^{2}}-\frac{n^{2}}{\Lambda_{z}^{2}}\right)\hat{\omega_{y}}\nonumber \\
\frac{\partial}{\partial t}\left(D^{2}-\frac{\ell^{2}}{\Lambda_{x}^{2}}-\frac{n^{2}}{\Lambda_{z}^{2}}\right)\hat{v}+\hat{H}_{5} & =\frac{1}{Re}\left(D^{2}-\frac{\ell^{2}}{\Lambda_{x}^{2}}-\frac{n^{2}}{\Lambda_{z}^{2}}\right)^{2}\hat{v}.\label{eq:kmm-etaandv}
\end{align}
The boundary conditions are $\hat{\omega_{y}}(\pm1)=\hat{v}(\pm1)=\frac{d\hat{v}}{dy}(\pm1)=0$.
Here 
\[
H_{4}=\frac{\partial H_{1}}{\partial z}-\frac{\partial H_{3}}{\partial x}\quad\text{and}\quad H_{5}=\frac{\partial^{2}H_{2}}{\partial x^{2}}+\frac{\partial^{2}H_{2}}{\partial z^{2}}-\frac{\partial^{2}H_{1}}{\partial y\partial x}-\frac{\partial^{2}H_{3}}{\partial y\partial z}.
\]
The entire velocity field can be recovered using $\bar{u}$, $\bar{w}$,
$\omega_{y}$, and $v$ \cite{KimMoinMoser1987}. 

Imposing physically correct boundary conditions on pressure causes
some complications and is a potential pitfall. Early discussions of
this issue are found in \cite{KleiserSchumann1980,MoinKim1980}. A
thorough discussion of this topic, important both for mathematical
theory and for computation, is found in an illuminating paper by Rempfer
\cite{Rempfer2006}.

\subsection{Time stepping using Green's functions}

If the Kim-Moin-Moser equations \eqref{eq:kmm-mean-modes} and \eqref{eq:kmm-etaandv}
are discretized in time, we get linear boundary value problems in
the wall-normal or $y$ direction. Green's functions will be used
to solve these linear boundary value problems. An advantage of this
method is that the boundary layers are analytically built into the
Green's functions.

The original paper by Kim, Moin, and Moser \cite{KimMoinMoser1987}
used the CNAB (Crank-Nicolson and Adam-Bashforth) discretization in
time. If the method is applied to the $\hat{\omega}_{y}$ equation
in \eqref{eq:kmm-etaandv}, we get 
\[
\frac{\hat{\omega}_{y}^{n+1}-\hat{\omega}_{y}^{n}}{\Delta t}=-\frac{3\hat{H}_{4}^{n}-\hat{H}_{4}^{n-1}}{2}+\frac{1}{Re}\left(D^{2}-\frac{\ell^{2}}{\Lambda_{x}^{2}}-\frac{n^{2}}{\Lambda_{z}^{2}}\right)\left(\frac{\hat{\omega}_{y}^{n+1}+\hat{\omega}_{y}^{n}}{2}\right).
\]
The superscripts denote time steps. It is well-known that the numerical
stability of Crank-Nicolson can be dicey in spite of its stability
region being the entire left half plane. The eigenvalue equal to $\lambda$
corresponds to an amplification factor of $(1+\lambda\Delta t)/(1-\lambda\Delta t)$.
The amplification is by a factor less than $1$ in magnitude for eigenvalues
with a negative real part. However, the amplification factor can be
very close to $1$ for eigenvalues such as $\lambda=-10^{10}$ which
correspond to rapid decay. If care is taken to use the same scheme
for differentiating $\hat{\omega}_{y}^{n+1}$ and $\hat{\omega}_{y}^{n}$,
or if the boundary value problem is solved for $\hat{\omega}_{y}^{n+1}+\hat{\omega}_{y}^{n}$
at each time step, CNAB will be stable. We found it difficult to stabilize
CNAB for the $\hat{v}$ equation in \eqref{eq:kmm-etaandv}. This
could be because we are mixing integration using a Green's function
with the second derivative that comes from the left hand side of \eqref{eq:kmm-etaandv},
or it could be because the best possible quadrature rules for this
problem are yet to be derived. We will not consider CNAB any further.

Suppose $dX/dt=f(X)+\triangle X/Re$, where $f(X)$ is a nonlinear
term. The time discretizations we consider are of the following form:
\[
\frac{1}{\Delta t}\left(\gamma X^{n+1}+\sum_{j=0}^{s-1}a_{j}X^{n-j}\right)=\sum_{j=0}^{s-1}b_{j}f(X^{n-j})+\frac{1}{Re}\triangle X^{n+1}.
\]
If the $\bar{u}$ equation of \eqref{eq:kmm-mean-modes} is discretized
in time, it fits the template second order boundary value problem
$(D^{2}-\beta^{2})u=f$ with
\begin{equation}
u=\bar{u}^{n+1},\quad\beta^{2}=\frac{\gamma Re}{\Delta t},\quad f=\frac{Re}{\Delta t}\sum_{j=0}^{s-1}a_{j}\bar{u}^{n-j}+Re\sum_{j=0}^{s-1}(\bar{H}_{1}^{n-j}-p_{g}^{n+1}).\label{eq:ubar-bvp}
\end{equation}
The time discretization of the $\bar{w}$ equation of \eqref{eq:kmm-mean-modes}
fits the template second order boundary value problem $(D^{2}-\beta^{2})u=f$
with
\begin{equation}
u=\bar{w}^{n+1},\quad\beta^{2}=\frac{\gamma Re}{\Delta t},\quad f=\frac{Re}{\Delta t}\sum_{j=0}^{s-1}a_{j}\bar{w}^{n-j}+Re\sum_{j=0}^{s-1}H_{3}^{n-j}.\label{eq:wbar-bvp}
\end{equation}
The time discretization of the $\hat{\omega}_{y}$ equation of \prettyref{eq:kmm-etaandv}
also fits the template second order boundary value problem:

\begin{equation}
u=\hat{\omega}_{y}^{n+1},\quad\beta^{2}=\frac{\ell^{2}}{\Lambda_{x}^{2}}+\frac{n^{2}}{\Lambda_{z}^{2}}+\frac{\gamma Re}{\Delta t},\quad f=\frac{Re}{\Delta t}\sum_{j=0}^{s-1}a_{j}\hat{\omega}_{y}^{n-j}+Re\sum_{j=0}^{s-1}b_{j}\hat{H}_{4}^{n-j}.\label{eq:etahat-bvp}
\end{equation}
The time discretization of the $\hat{v}$ equation of \prettyref{eq:kmm-etaandv}
fits the template fourth order boundary value problem \eqref{eq:fourth-order-bvp}
as follows:

\begin{equation}
u=\hat{v}^{n+1},\quad\alpha^{2}=\frac{\ell^{2}}{\Lambda_{x}^{2}}+\frac{n^{2}}{\Lambda_{z}^{2}},\quad\beta^{2}=\alpha^{2}+\frac{\gamma Re}{\Delta t},\quad f=\frac{Re}{\Delta t}\sum_{j=0}^{s-1}a_{j}(D^{2}-\alpha^{2})\hat{v}^{n-j}+Re\sum_{j=0}^{s-1}b_{j}\hat{H}_{5}^{n-j}.\label{eq:vhat-bvp}
\end{equation}
From the manner in which $\alpha^{2}$ and $\beta^{2}$ arise, the
advantage of casting the Green's function for the template fourth
order problem using $\delta=\alpha^{2}-\beta^{2}$ and $\sigma=\alpha^{2}+\beta^{2}$
, as we did in Figure \ref{fig:EntriesOf4x4Matrix} and Section 2.2,
is evident. Both $\delta$ and $\sigma$ can be evaluated without
cancellation errors.

The equations \eqref{eq:ubar-bvp}, \eqref{eq:wbar-bvp}, \eqref{eq:etahat-bvp},
and \eqref{eq:vhat-bvp} are reduced to quadratures of the type \eqref{eq:quadrature-1}
and \eqref{eq:quadrature-2} as explained in Section 2. In the fourth
order problem \eqref{eq:vhat-bvp}, which is solved for $\hat{v}^{n+1}$,
the right hand side uses second derivatives of $\hat{v}^{n-j}$ from
previous stages. The calculation of this second derivative is reduced
to quadratures using \eqref{eq:green4-f2duddu}.

The time stepping schemes we have implemented use
\begin{gather*}
s=1,\,\gamma=1,\, a_{1}=-1,\, b_{1}=1\\
s=2,\,\gamma=3/2,\, a_{1}=-2,\, a_{2}=1/2,\, b_{1}=2,\, b_{2}=-1\\
s=2,\,\gamma=11/6,\, a_{1}=-3,\, a_{2}=3/2,\, a_{3}=-1/3,\, b_{1}=3,\, b_{2}=-3,\, b_{3}=1.
\end{gather*}
These are implicit-explicit multistep schemes that correspond to backward
Euler, BDF2, and BDF3 respectively. For derivations of these schemes,
see \cite{AscherRuuthWetton1995,Crouzeix1980,Varah1980}.

\begin{figure}

\begin{centering}
\includegraphics[scale=0.4]{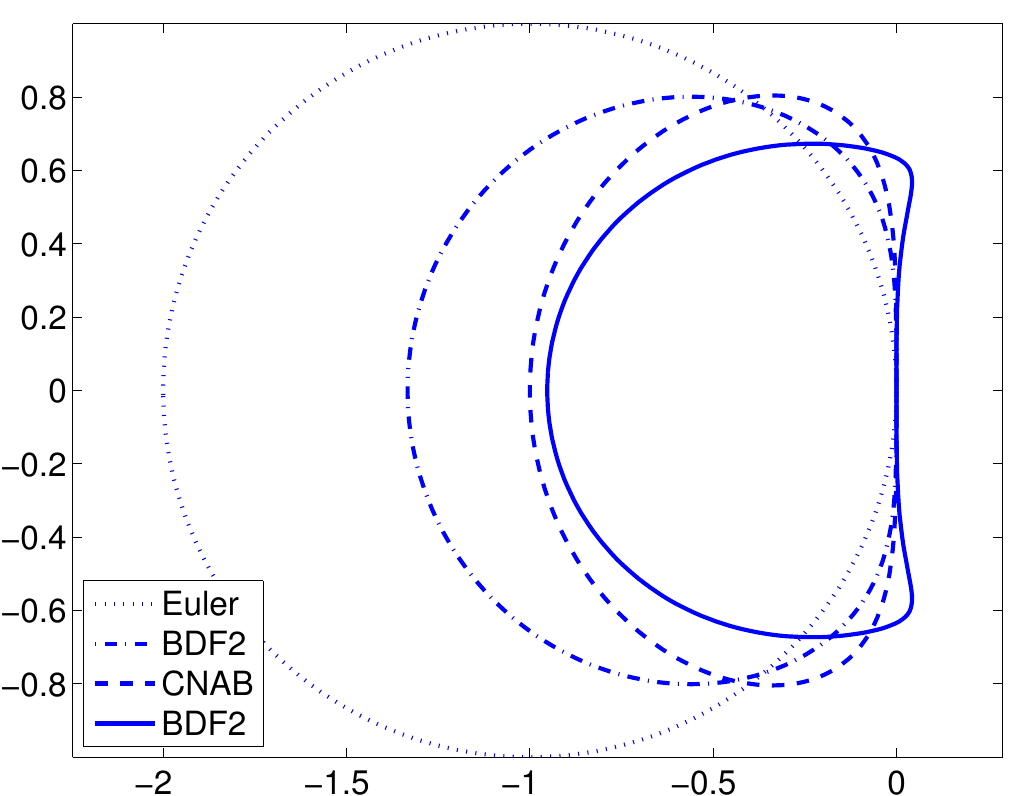}\caption{Absolute stability regions of the explicit halves of the implicit-explicit
methods based on backward Euler, CNAB, BDF2, and BDF3.\label{fig:Absolute-stability-regions}}

\par\end{centering}

\end{figure}

The absolute stability regions of the explicit halves of some time
integration schemes are shown in Figure \ref{fig:Absolute-stability-regions}
for reference. In turbulence simulations, the nonlinear advection
term, which is discretized using an explicit scheme, is more of a
constraint on the time step than the diffusion term which is handled
implicitly. The discretization of the viscous diffusion term is by
itself unconditionally stable. 

To complete the description of these methods, we need to explain the
method that is used to solve the quadrature problems \eqref{eq:quadrature-1}
and \eqref{eq:quadrature-2} numerically. These quadrature problems
are extremely well-conditioned even for large $\mu$. The method that
is currently implemented for \eqref{eq:quadrature-1} expands the
integrand in a Chebyshev series and integrates the terms of the series
using well-known formulas. A better method would be to obtain quadrature
nodes and weights for weighted integrals with weight functions equal
to  ${\rm e}^{\pm\mu(t\mp1)}$, evaluate the other factor $f(t)$
at the nodes using an accurate and efficient interpolation algorithm,
and sum using the quadrature weights. Such a method will be developed
in future research. The current method develops spurious difficulties
when $\mu$ is large, although it is good enough to allow us to exhibit
simulations of fully developed turbulence in Section 5. In addition,
if the idea of representing functions using piecewise Chebyshev collocation,
which is briefly mentioned in the introduction and discussed at greater
length in the context of spectral integration in \cite{Viswanath2012},
is employed, even the basic quadrature that is now implemented is
likely to be adequate, even for very large $\mu$. The Green's functions
of Section 2 are completely independent of the discretization used
in the wall-normal or $y$ direction. The discretization could be
Chebyshev, or piecewise Chebyshev, or something else. The ease with
which piecewise Chebyshev discretization can be incorporated into
numerical methods that use Green's functions was one of our prime
motivations. At the moment, the quadrature problem \eqref{eq:quadrature-2}
is solved numerically using spectral integration. Similar comments
apply to this quadrature problem as well.

In direct numerical simulation of turbulence it is more common to
use explicit-implict Runge-Kutta methods in the direct numerical simulation
of turbulence \cite{HoyasJiminez2006,MoserKimMansour1999,Nikitin2006B,SpalartMoserRogers1991}.
The advantanges of Runge-Kutta are ease of initialization and the
possibility of adaptive time-stepping with embedded pairs. In the
solution of ordinary differential equations, multistep methods and
Runge-Kutta methods have been compared extensively \cite{HairerNorsettWanner}.
It is now known that multistep methods can be initialized and time
stepped adatively with equal effectiveness. This technology will be
carried over to implicit-explicit multistep formulas in future research.
Here we have preferred multistep formulas partly in order to leave
room for this future research. Runge-Kutta methods typically have
larger stability region but with each step costing more function evaluations.
In future research, we will show how to derive implicit-explicit multistep
methods with stability regions that are particularly advantageous
for problems such as high $Re$ turbulence simulations.

\section{A discrete model without spatial differentiation}

Here we explain how numerical differentiation in the wall-normal or
$y$ direction can be completely eliminated by employing the divergence
form of the nonlinear term. 

The Kim-Moin-Moser equation for $\bar{u}$ given in \eqref{eq:kmm-mean-modes}
has an $\bar{H}_{1}$ term. The nonlinear term $H_{1}$ is given by
$H_{1}=\partial_{x}(u^{2})+\partial_{y}(uv)+\partial_{z}(uw)$. So
we may take $\bar{H}_{1}=\partial_{y}\bar{uv}$. The mean mode $\bar{u}$
is advanced in time by solving the boundary value problem \eqref{eq:ubar-bvp}.
The right hand side of the template boundary value problem is taken
to be $f$, where $f$ is given by \eqref{eq:ubar-bvp}. The $\bar{H_{1}}$
terms in that right hand side may be removed and a new right hand
side written as $df_{1}/dy$ introduced in their place. The contribution
of a given time step to $f_{1}$ is taken to be $Re$ times $\bar{uv}$
evaluated at that time step. These contributions are weighted by $b_{j}$
and combined as before. The mean mode $\bar{u}$ is advanced in time
using \eqref{eq:green2-f2u} and \eqref{eq:green4-df2u} and its derivative,
if needed in \eqref{eq:kmm-pg}, is calculated using \eqref{eq:green2-f2du}
and \eqref{eq:green2-df2du}. 

The $\bar{w}$ and $\hat{\omega}_{y}$ equations are treated similarly.
The $\hat{\omega}_{y}$ equation in \eqref{eq:kmm-etaandv} has an
$\hat{H}_{4}$ term. The terms of $H_{4}$ that do not require differentiation
in $y$ are 
\[
\frac{\partial^{2}u^{2}}{\partial x\partial z}+\frac{\partial^{2}uw}{\partial z^{2}}-\frac{\partial^{2}uw}{\partial x^{2}}-\frac{\partial^{2}w^{2}}{\partial x\partial z}
\]
and the terms which require a single differentiation are 
\[
\frac{\partial^{2}uv}{\partial y\partial z}-\frac{\partial^{2}vw}{\partial x\partial y}.
\]
These terms are separated and some of them are removed from the $f$
given in \eqref{eq:etahat-bvp} and a new term $df_{1}/dy$ is inserted
in the right hand side. 

The treatment of $\hat{v}$ modes is a bit more elaborate. The right
hand side $H_{5}$ may be written as
\begin{multline*}
\left(\frac{\partial^{3}(uv)}{\partial x^{3}}+\frac{\partial^{3}(uv)}{\partial x\partial z^{2}}+\frac{\partial^{3}(vw)}{\partial x^{2}\partial z}+\frac{\partial^{3}(vw)}{\partial z^{3}}\right)+\left(\frac{\partial^{3}(v^{2})}{\partial x^{2}\partial y}+\frac{\partial^{3}(v^{2})}{\partial y\partial z^{2}}-\frac{\partial^{3}(u^{2})}{\partial x^{2}\partial y}-2\frac{\partial^{3}(uw)}{\partial x\partial y\partial z}-\frac{\partial^{3}(w^{2})}{\partial y\partial z^{2}}\right)\\
+\left(-\frac{\partial^{3}(uv)}{\partial x\partial y^{2}}-\frac{\partial^{3}(vw)}{\partial y^{2}\partial z}\right),
\end{multline*}
where terms are grouped depending upon whether they require zero,
one, or two differentiations with respect to $y.$ The right hand
side of the template fourth order problem which is given as $f$ in
\eqref{eq:vhat-bvp} may be rewritten as $f+df_{1}/dy+d^{2}f_{2}/dy^{2}$,
with none of $f$, $f_{1}$, and $f_{2}$ involving differentiation
with respect to $y$. The integral equations \eqref{eq:green4-f2u}
through \eqref{eq:green4-ddf} may be used to produce $\hat{v}$,
$d\hat{v}/dy$, and even $d^{2}\hat{v}/dy^{2}$. The first derivative
$d\hat{v}/dy$ is needed when the full velocity field is reconstructed
from $\bar{u}$, $\bar{w}$, and the modes $\hat{\omega}_{y}$ and
$\hat{v}$ \cite{KimMoinMoser1987}. Turning this into a practical
method hinges on numerical issues discussed at the end of Section
2. Eliminating numerical differentiation with respect to $y$ may
be useful if a large number of Chebyshev points is used in the $y$
direction. However, it appears that piecewise Chebyshev grids can
resolve boundary layers and internal layers while using only a small
number of Chebyshev points in each sub-interval. Handling piecewise
Chebyshev grids after reducing each time step to quadratures of the
form \eqref{eq:quadrature-1} and \eqref{eq:quadrature-2} is as easy
as $\int_{a}^{c}=\int_{a}^{b}+\int_{b}^{c}$. In piecewise Chebyshev
grids, numerical errors due to differentiation are not a cause for
concern.

\section{Numerical validation}

The numerical method described in Section 3 for solving the incompressible
Navier-Stokes equations has been implemented and tested in numerous
ways. Many earlier computations of plane Couette flow and channel
flow have been reproduced with precision. In this section, we describe
a few computations of fully developed turbulence. All the computations
described here are for channel flow. Channel flow is used far more
often than plane Couette flow in turbulence simulations.

A useful summary of turbulence computations of channel flow is given
by Toh and Itano \cite{TohItano2005}. The Reynolds number $Re$ by
itself is not a good metric to assess the difficulty of a turbulence
computation because simple solutions such as the laminar solution
can be computed easily at any Reynolds number. The metric must take
into account both the Reynolds number and the kind of solutions that
the simulation generates. One useful metric is obtained by taking
the time average of $d\bar{u}/dy$ at the walls, where $\bar{u}$
is the mean streamwise velocity, and then computing $Re_{\tau}=\sqrt{Re\times|d\bar{u}/dy|}$.
The frictional Reynolds number $Re_{\tau}$ is a good measure of the
difficulty of the simulation. The highest $Re_{\tau}$ reached appears
to be $2000$ in the work of Hoyas and Jimenez\cite{HoyasJiminez2006}.
The lowest $Re_{\tau}$ at which one still observe turbulence appears
to be around $100$ \cite{JimenezMoin1991}. Nikitin \cite{Nikitin2006,Nikitin2006B}
has derived a method for solving the incompressible Navier-Stokes
equations in orthogonal curvilinear coordinates. Nikitin's method
uses staggered grids, centered differences, cell averages for nonlinear
terms, and explicit projections to enforce the incompressibility condition.
The same program can handle channel, pipe, eccentric pipe and other
geometries. Nikitin's method has been used to simulate fully developed
turbulence at an $Re_{\tau}$ of $500$. 

\begin{figure}
\begin{centering}
\includegraphics[scale=0.35]{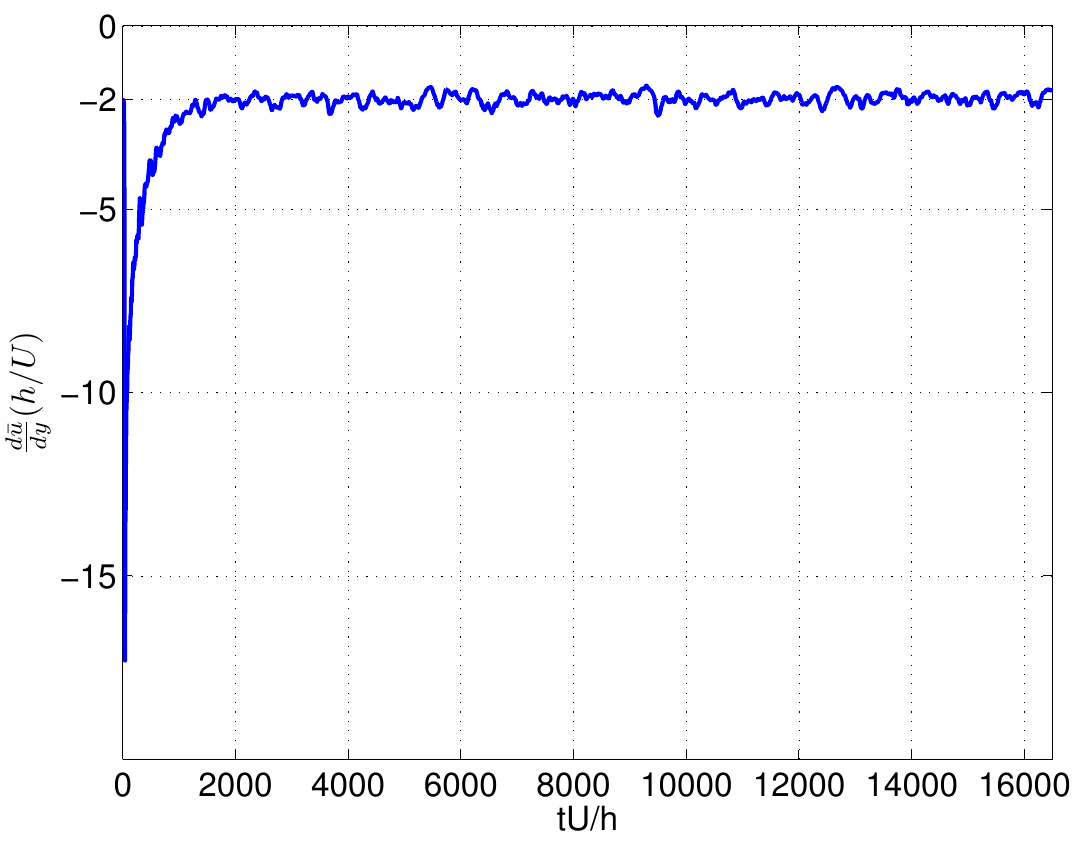}\qquad{}\includegraphics[scale=0.35]{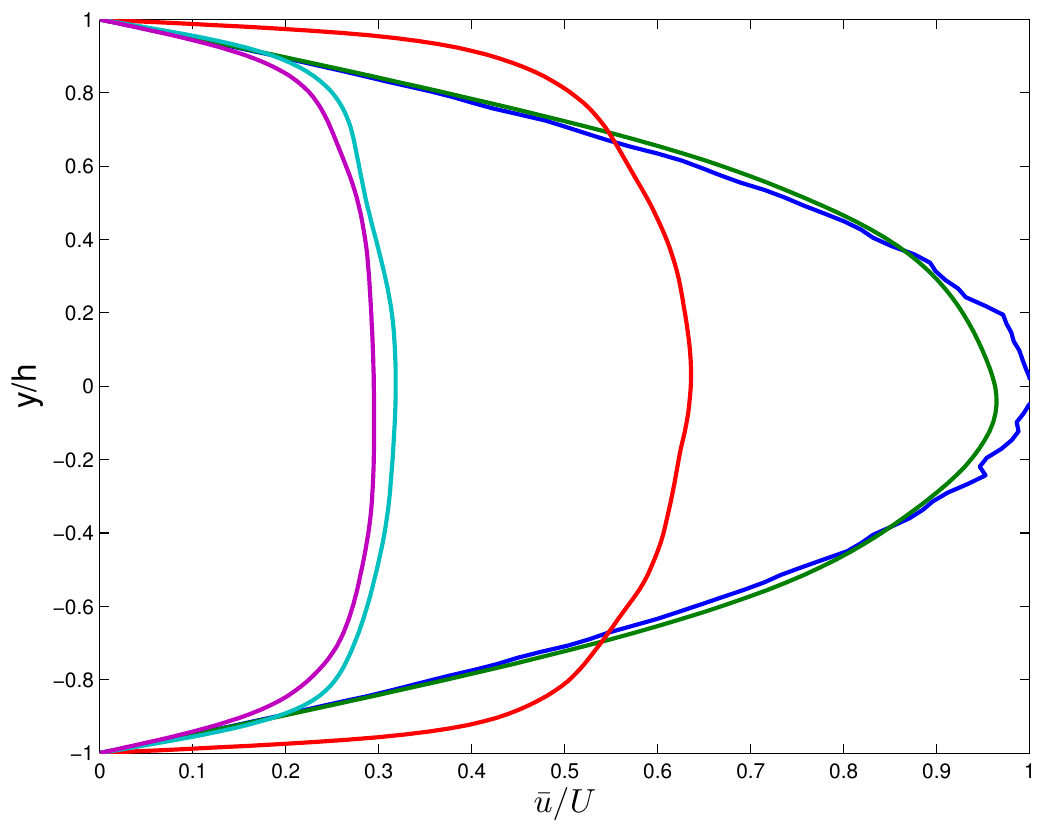}\caption{The first plot shows the variation of the mean shear at the wall as
a function of time. The second plot shows the evolution of the mean
flow during transition to turbulence. Units are given in terms of
channel half-width $h$ and laminar center-line velocity $U$. \label{fig:pressure-bndry-cond}}

\par\end{centering}

\end{figure}
Prior to numerical validation, we discuss the pressure boundary condition
to make a point about the behavior of $\bar{H}_{1}$ in a turbulent
flow. Figure \ref{fig:pressure-bndry-cond} shows a simulation of
channel flow with $Re=10^{4}$, $\Lambda_{x}=2.0$ and $\Lambda_{z}=1.0$.
The grid parameters used $L=64$, $M=128$, and $N=64$. The boundary
condition used was $p_{g}=2/Re$ in the equation for $\bar{u}$ given
in \eqref{eq:kmm-mean-modes}. It is noticeable that the mean shear
converges to $-2$ at the upper wall. The equation for the mean flow
$\bar{u}$ is given by $\partial\bar{u}/\partial t+\bar{H}_{1}=p_{g}+\frac{1}{Re}\partial^{2}\bar{u}/\partial y^{2}$.
The mean flow fluctuates very little once the flow is fully turbulent.
If we average the mean flow over time, we get $\partial^{2}\bar{u}/\partial y^{2}=Re(\bar{H}_{1}-p_{g})$
with $\bar{u}(\pm1)=0$. The Green's function for this boundary value
problem is given by $G(y,t)=(y-1)(\eta+1)/2$ for $-1\leq\eta\leq y\leq1$.
Using \eqref{eq:green2-f2du}, we get 
\[
\frac{\partial\bar{u}}{\partial y}\Biggl|_{y=+1}=\int_{-1}^{1}G_{1}(y,\eta)f(\eta)\, d\eta=Re\int_{-1}^{1}\frac{(\eta-1)(\bar{H}_{1}-p_{g})}{2}\, d\eta=-2.
\]
Since $p_{g}=2/Re$, we have $\int_{-1}^{1}\bar{H}_{1}(y)\, dy=0$.
The reason $\bar{H}_{1}$ satisfies this condition appears not to
be known. In rotating channel flows, the mean flow exhibits a stretch
where its slope is given by the rate of rotation (see Figure 3 of
Yang and Wu \cite{YangWu2012}). The reason for that phenomenon too
appears to be unknown. The mean flow flattens during transition as
evident from the second plot of Figure \ref{fig:pressure-bndry-cond}.
At the very beginning, the mean flow develops oscillations which look
somewhat like the oscillatory shears considered in \cite{LiLin2011}.
It is well-know that fixing the mass flux leads to a quite different
value for the mean shear. Figure \ref{fig:Similar-to-Figure} shows
a turbulence simulation which fixed the mass-flux using \eqref{eq:kmm-pg}.
The parameters used were $Re=10^{4}$, $\Lambda_{x}=1.0$ and $\Lambda_{z}=0.5$.
The grid parameters used $L=256$, $M=256$, and $N=128$ correspond
to approximately $8.5$ million grid points. 

\begin{figure}
\begin{centering}
\includegraphics[scale=0.35]{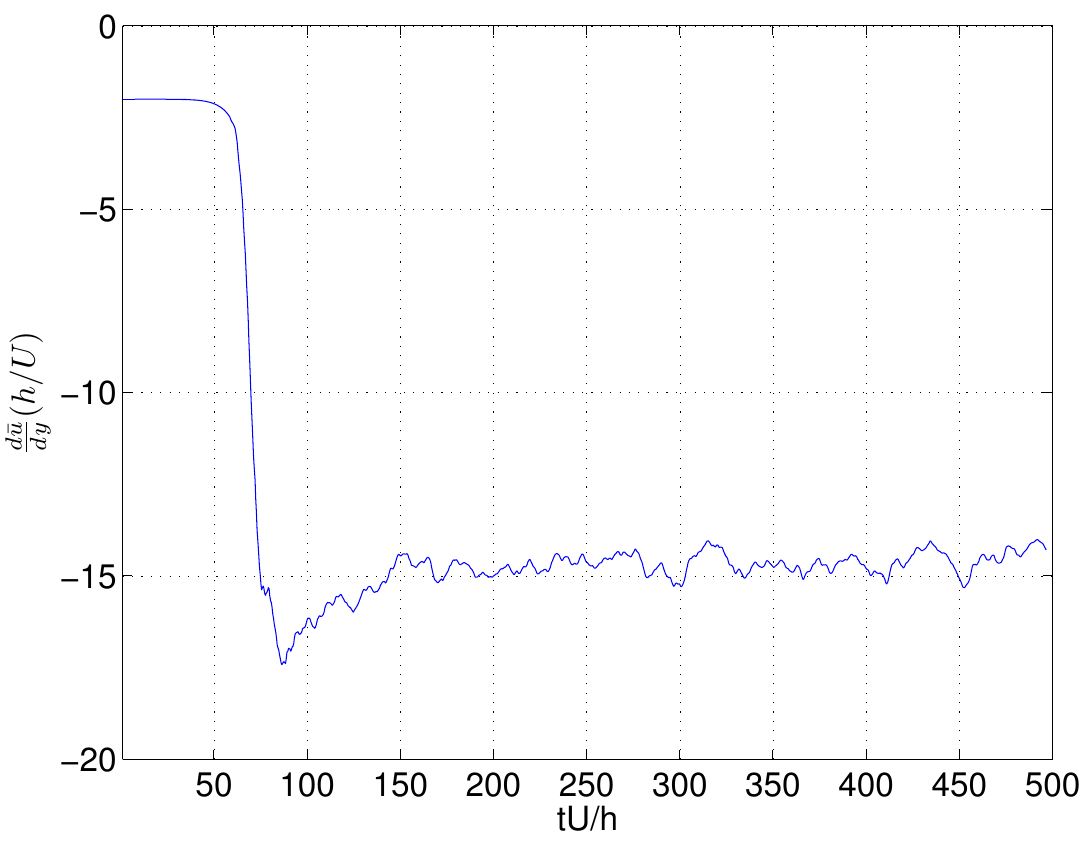}\qquad{}\includegraphics[scale=0.35]{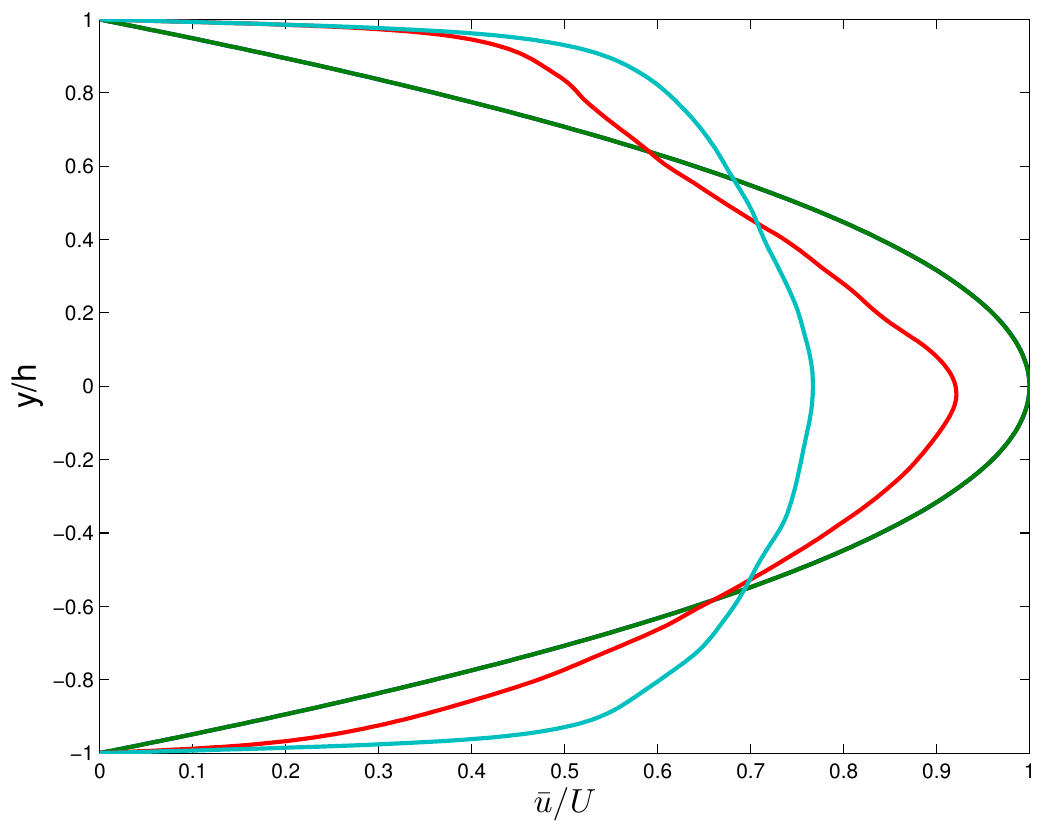}
\par\end{centering}

\caption{Similar to Figure \ref{fig:pressure-bndry-cond} but using a boundary
condition that fixes the mass flux. \label{fig:Similar-to-Figure}}
\end{figure}

\begin{table}
\begin{centering}
\begin{tabular}{|c|c|c|c|c|c|c|c|c|c|c|}
\hline 
$Re$ & $Re_{\tau}$ & $\Lambda_{x}$ & $\Lambda_{z}$ & $L$ & $M$ & $N$ & $\Delta x^{+}$/$\Delta y_{c}^{+}$/$\Delta z^{+}$ & $UT_{1}/h$ & $UT_{2}/h$ & $U\Delta t/h$\tabularnewline
\hline 
\hline 
$4000$ & $171.4$ & $2$ & $2/3$ & $320$ & $128$ & $128$ & $6.7$/$4.2$/$5.6$ & $1125$ & $300$ & $.005$\tabularnewline
\hline 
$10^{4}$ & $379.9$ & $1$ & $1/2$ & $320$ & $256$ & $256$ & $7.4$/$4.6$/$4.6$ & $580$ & $120$ & $.002$\tabularnewline
\hline 
\end{tabular}
\par\end{centering}

\caption{Parameters of turbulence simulations used for numerical validation.
The $\Delta x^{+}/\Delta y_{c}^{+}/\Delta z^{+}$ column gives the
grid resolution in wall units, with $\Delta y_{c}^{+}$ being the
distance between grid points along $y$ near the center of the channel.
The initialization time used to eliminate transients $T_{1}$, the
time over statistics is accumulated $T_{2}$, and the time step $\Delta t$
are non-dimensionalized using the centerline speed of the laminar
solution $U$ and the channel half-width $h$.\label{tab:sec5-validation}}
\end{table}

\begin{figure}
\subfloat[]{

\raggedleft{}\includegraphics[scale=0.4]{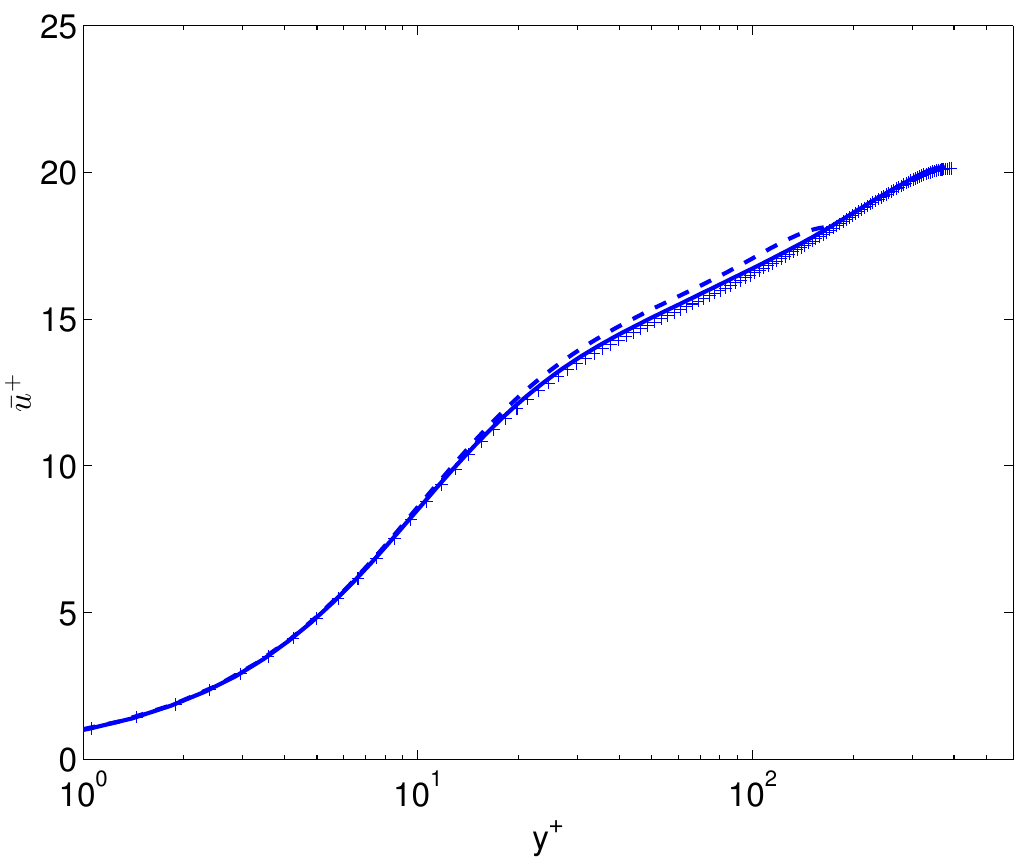}}\subfloat[]{\begin{centering}
\includegraphics[scale=0.4]{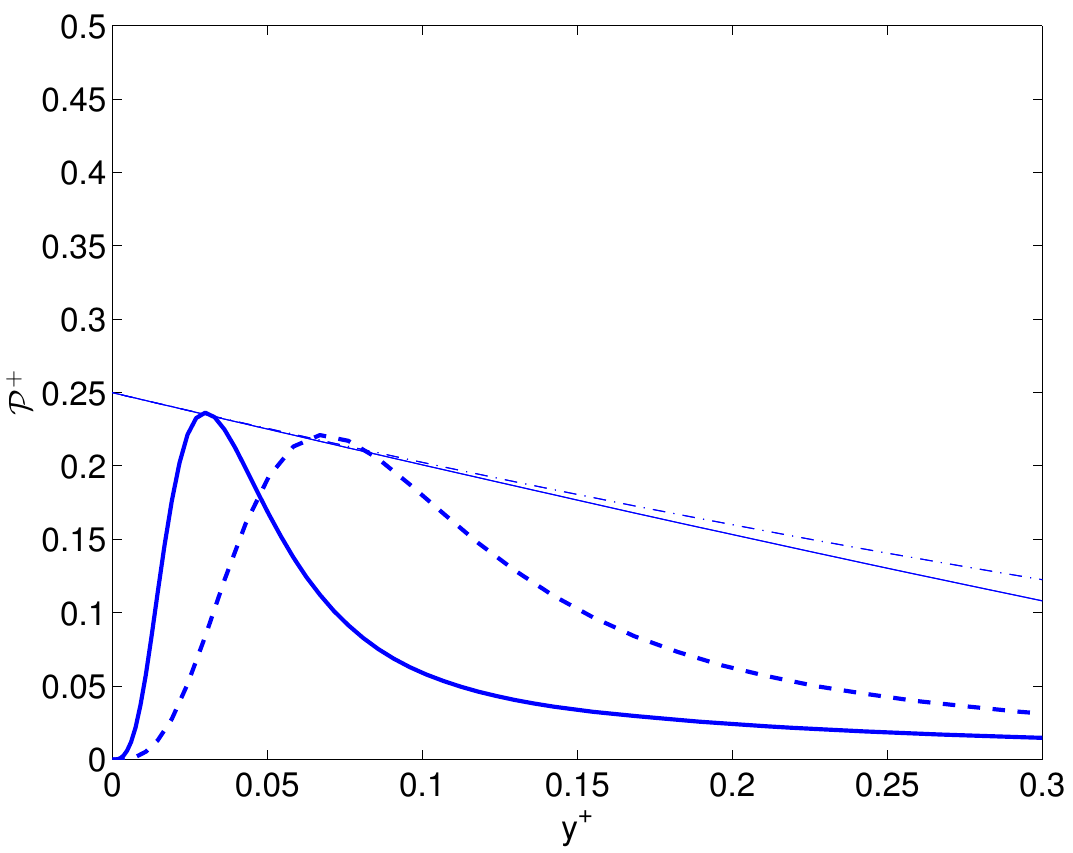}
\par\end{centering}

}

\subfloat[]{

\centering{}\includegraphics[scale=0.4]{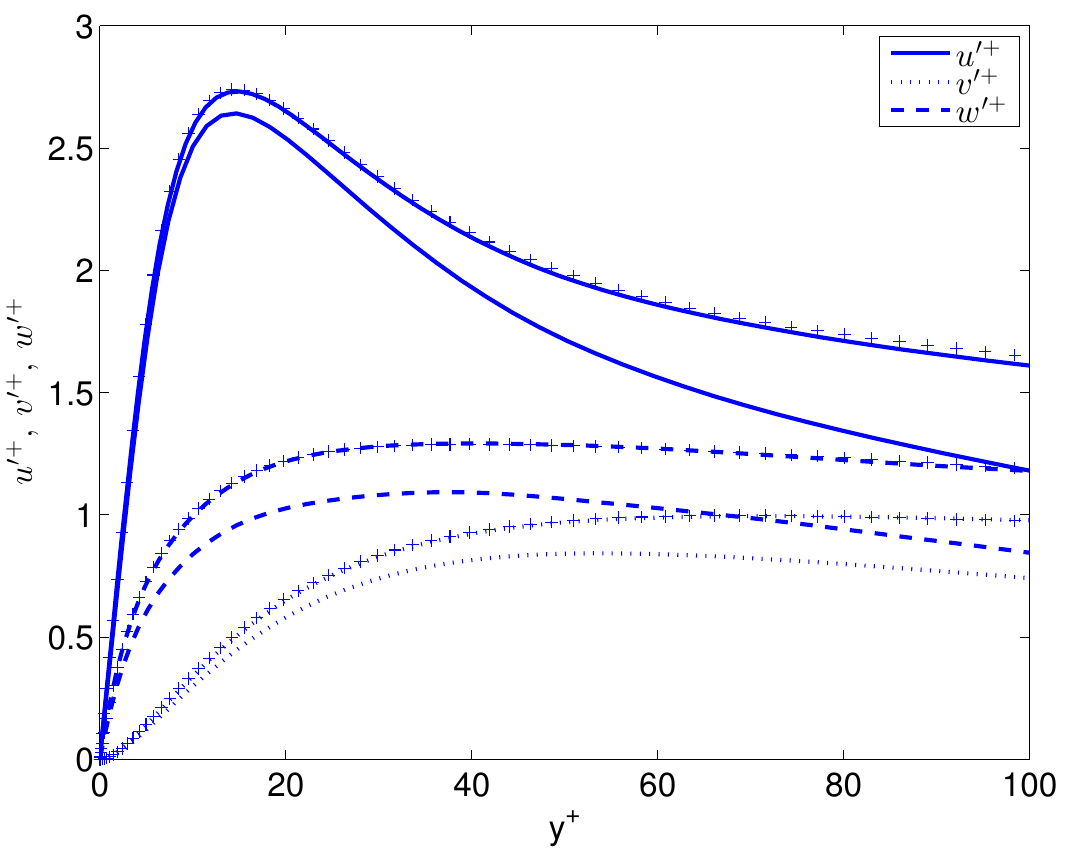}}\subfloat[]{

\begin{centering}
\includegraphics[scale=0.4]{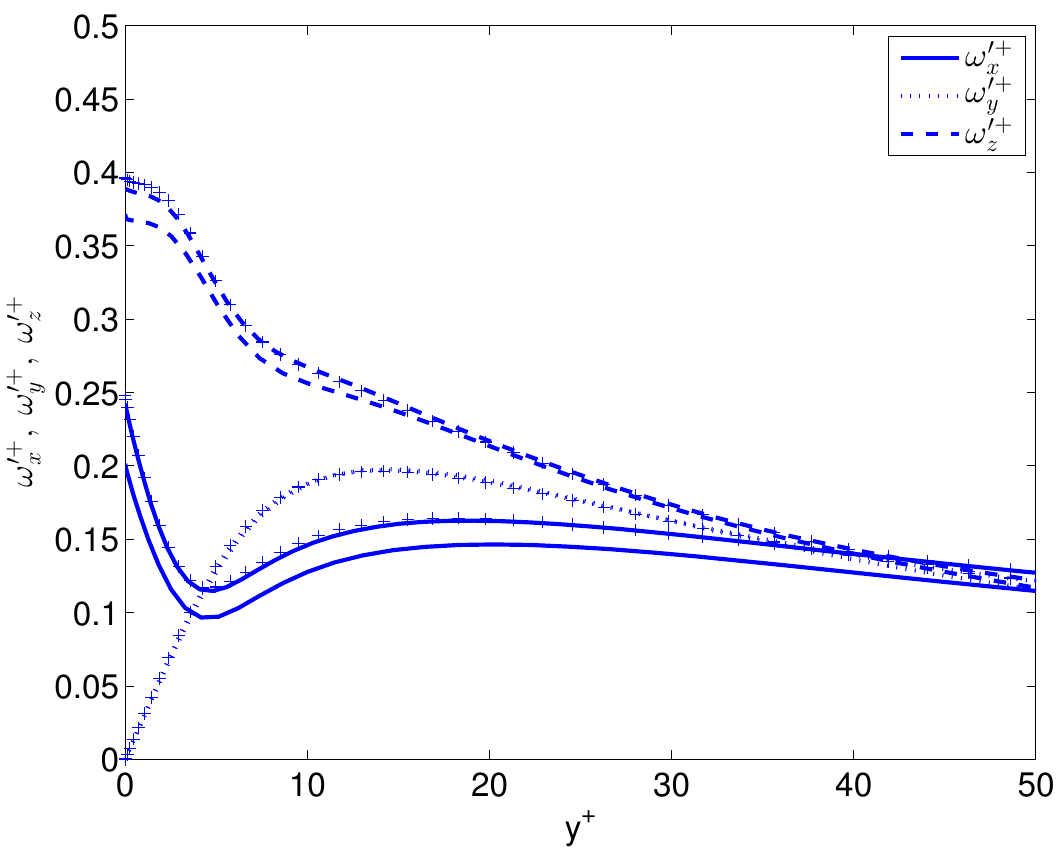}
\par\end{centering}

}

\caption{Statistics for the two cases detailed in Table \ref{tab:sec5-validation}.
In all the figures, the horizontal axis is $y^{+}$, which denotes
the distance from the wall in frictional units. In plots (a) and (b),
the dashed lines are for $Re_{\tau}=171$, which is the first row
in the table, and the solid line is for the other row with $Re_{\tau}=380$.
The data in these plots is compared to classical simulations as well
as existing theory in the text. Data from the $Re_{\tau}=392$ simulation
by Moser, Kim and Mansour \cite{MoserKimMansour1999} is marked using
$+$.\label{fig:sec5-validation}}
\end{figure}
The two test cases used for numerical validation are detailed in Table
\ref{tab:sec5-validation}. The two cases are close to but not exactly
the same as two of the cases reported in the simulations of Moser
et al. \cite{MoserKimMansour1999}. The first test case has $Re_{\tau}=171$
against $Re_{\tau}=180$ in \cite{MoserKimMansour1999}. The second
test case has $Re_{\tau}=380$ against $Re_{\tau}=392$ in \cite{MoserKimMansour1999}. 

The plots in Figure \ref{fig:sec5-validation} may be compared with
the plots of Moser et al. \cite{MoserKimMansour1999}. Figure \ref{fig:sec5-validation}
(a) shows the mean streamwise velocity as a function of the distance
from the wall. This curve is usually fitted using the famous log-law
of the wall. Moser et al. discuss power law fits as well. The plot
for $Re_{\tau}=380$ is quite close to the plot for $Re_{\tau}=392$
given in \cite{MoserKimMansour1999}. The plot for $Re_{\tau}=171$
shows a pronounced low Reynolds number effect. However, the low Reynolds
number effect is much less pronounced in our test than in the $Re_{\tau}=180$
test in \cite{MoserKimMansour1999}. This could be either because
of the superior resolution in our test or because of the much longer
time integration used to eliminated transients.

Figure \ref{fig:sec5-validation} (b) shows the turbulence energy
production as a function of the distance from the wall. The straight
lines in the figure are theoretical curves for the peak (solid line)
and the envelope (dashed line) derived by Laadhari \cite{Laadhari2002}.
Both test cases are in excellent agreement with Laadhari's theory
and simulations.

Figure \ref{fig:sec5-validation} (c) and (d) show turbulence intensities
and rms vorticity profiles, respectively. The plots for $Re_{\tau}=380$
are in good agreement with plots for $Re_{\tau}=392$ reported in
\cite{MoserKimMansour1999}. There is a noticeable discrepancy in
the rms plots of $\omega_{x}$ between our $Re_{\tau}=171$ test case
and the $Re_{\tau}=180$ test case of \cite{MoserKimMansour1999}.
This discrepancy is most probably due to the superior resolution of
our test case. Close observation of the plots in Figure \ref{fig:sec5-validation}
shows that the mean streamwise velocity $\bar{u}$ of the $Re_{\tau}=380$
test case is slightly above that of $Re_{\tau}=392$ benchmark (marked
using $+$) while the turbulence intensities as indicated by the rms
velocity $u'$ are more noticeably lower. The slight differences,
especially the later, are probably because of the lower value of the
frictional Reynolds number in the test case as against the benchmark.
The time interval used to gather statistics is another factor which
causes slight variations.

\section{Conclusion}

Green's function based methods are known to be advantageous for resolving
boundary layers. Since boundary layers become thinner as the Reynolds
number increases, it is reasonable to try Green's function based methods
for fully developed turbulence. In this paper, we have worked out
a numerical method based on Green's functions for channel flow and
plane Couette flow and demonstrated that it is capable of reproducing
turbulence phenomena correctly.

Current methods for turbulent channel flow and plane Couette flow
are the result of intensive research spanning more than three decades.
The Green's function approach developed here builds upon that research
at many points. Green's functions have not been shown to work for
nonlinear problems of the complexity of fully developed turbulence
requiring tens of millions of grid points. Getting the Green's function
approach to work for fully developed turbulence is a task in itself. 

In the sequel to this paper, the method will be extended in two different
directions. The first direction will use a piecewise Chebyshev grid
in the wall-normal direction as well as spectral integration, which
amounts to implicit use of Green's functions. The second direction
will use a piecewise Chebyshev grid, explicit Green's functions as
developed in this paper, and carefully derived quadrature rules. Both
 directions appear capable of challenging or going beyond the current
state of the art in turbulence simulations.

\section{Acknowledgments }

\lyxaddress{We thank Sergei Chernyshenko and Nikolay Nikitin for helpful discussions.
We thank Fabian Waleffe and the referees for many valuable suggestions.
This research was partially supported by NSF grants DMS-0715510, DMS-1115277,
and SCREMS-1026317.}

\bibliographystyle{plain}
\bibliography{references}

\end{document}